\newcommand{\ket}[1]{ \left\lvert #1\right\rangle}
\newcommand{\ketbra}[2]{\left\lvert #1 \rangle \! \langle #2 \right\rvert}
\DeclareMathOperator{\tr}{Tr}
\DeclareMathOperator{\id}{id}
\DeclareMathOperator{\End}{End}
\newcommand{\cC}{\ensuremath{\mathcal{C}}}
\newcommand{\cD}{\ensuremath{\mathcal{D}}}
\newcommand{\cE}{\ensuremath{\mathcal{E}}}
\newcommand{\cH}{\ensuremath{\mathcal{H}}}
\newcommand{\cM}{\ensuremath{\mathcal{M}}}
\newcommand{\cO}{\ensuremath{\mathcal{O}}}
\newcommand{\cP}{\ensuremath{\mathcal{P}}}
\newcommand{\cS}{\ensuremath{\mathcal{S}}}
\newcommand{\cU}{\ensuremath{\mathcal{U}}}
\newcommand{\cX}{\ensuremath{\mathcal{X}}}
\newcommand{\bbC}{\ensuremath{\mathbb{C}}}
\newcommand{\bbF}{\ensuremath{\mathbb{F}}}
\newcommand{\bbI}{\ensuremath{\mathbb{I}}}
\newcommand{\bbN}{\ensuremath{\mathbb{N}}}
\newcommand{\bbR}{\ensuremath{\mathbb{R}}}
\newcommand{\bbT}{\ensuremath{\mathbb{T}}}
\theoremstyle{plain}
\newtheorem{thm}{Theorem}
\newtheorem{prop}[thm]{Proposition}
\newtheorem{cor}[thm]{Corollary}
\theoremstyle{definition}
\newtheorem{defn}[thm]{Definition}
\begin{document}

\title{Real Randomized Benchmarking}
\date{August 9, 2018}
\author{A.~K.\ Hashagen}\affiliation{Department of Mathematics, Technical University of Munich, Germany}\orcid{0000-0002-8682-6510}
\author{S.~T.\ Flammia}\affiliation{Centre for Engineered Quantum Systems, School of Physics, University of Sydney, Sydney, Australia}\orcid{0000-0002-3975-0226}
\affiliation{Yale Quantum Institute, Yale University, New Haven, Connecticut 06520, USA} 
\author{D.\ Gross} \affiliation{Institute for Theoretical Physics, University of Cologne, Germany}
\author{J.~J.\ Wallman} \affiliation{Institute for Quantum Computing and Department of Applied 
Mathematics, University of Waterloo, Canada}\orcid{0000-0001-6943-5334}

\maketitle

\begin{abstract}
Randomized benchmarking provides a tool for obtaining precise quantitative estimates of the  average error rate of a physical quantum channel. 
Here we define \emph{real randomized benchmarking}, which enables a separate determination of the average error rate in the real and complex parts of the channel. 
This provides more fine-grained information about average error rates with approximately the same cost as the standard protocol. 
The protocol requires only averaging over the real Clifford group, a subgroup of the full complex  Clifford group, and makes use of the fact that it forms an orthogonal 2-design. 
It therefore allows benchmarking of fault-tolerant gates for an encoding which does not contain the full Clifford group transversally.
Furthermore, our results are especially useful  
when considering quantum computations on rebits (or real encodings of complex computations), in which case the real Clifford group now plays the role of the complex Clifford group when studying stabilizer circuits. 
\end{abstract}

\section{Introduction}
\label{sec:Introduction}

The design of reliable quantum information processing devices requires the quantitative characterization of the average error rate of a physical quantum channel. Full characterization of quantum processes is possible through quantum process tomography \cite{heinosaari_ziman_2012}. This method is, however, infeasible in practice. Firstly, it relies upon the challenging assumption that the set of measurements and the quantum state preparation admit lower errors than the process itself. Furthermore, the number of experimental configurations required -- including quantum state preparation and quantum measurements -- grows exponentially with the number of qubits even when employing improvements such as compressed sensing \cite{Flammia_2012, Gross_2010}.

An alternative approach is randomized benchmarking (RB) and variants thereof~\cite{Emerson_2005, Levi_2007, Emerson_2007, Dankert_2009, Magesan_2011, Gambetta_2012, Magesan_2012, Cross_2016, CarignanDugas_2015}. 
An RB protocol gives an estimate of the average fidelity between the realized and ideal implementations of a
group of quantum gates by estimating the decay rate of the survival probability over random sequences of varying
lengths.
The effort of implementing the RB protocol scales efficiently with the number of qubits and it is robust against measurement and state preparation errors. 
Due to this, 
RB has become a popular tool to assess the quality of quantum processes \cite{Knill_2008, Chow_2009, Ryan_2009, Olmschenk_2010, Brown_2011, Gaebler_2012,  Barends_2014, Xia_2015, Muhonen_2015, Asaad_2016}. 

In this work, we study RB protocols in which the quantum gates are taken from the \emph{real Clifford group},
which we refer to as \emph{real randomized benchmarking}. 
We define the notion of an \emph{orthogonal 2-design}, and show that the real Clifford group constitutes one.
This property allows one to efficiently estimate the average fidelity of an experimental implementation of the real Clifford group.

There are two primary motivations for using alternative groups for randomized benchmarking. First, some gates may be significantly worse than others due to different implementations (such as fault-tolerant implementations of non-transversal gates). Including such gates in the benchmarking group would result in a rapid decay dominated by the worst gate(s), so that little information can be obtained about the majority of gates. 
Furthermore, some quantum codes do not allow all transversal Clifford gates. The real Clifford group might, however, be accessible. 
This insight was recently used to do randomized benchmarking inside the code space of the [4,2,2] code using a variant of the protocol discussed in the present manuscript~\cite{Harper2018}.
Second, the average gate fidelity quantifies the error rate over the entire Hilbert space. If an experiment only involves states in a portion of Hilbert space, then the relative figure of merit should only average over the states in that portion of Hilbert space. Real randomized benchmarking allows a direct characterization of the average gate fidelity over real-valued density operators, which is directly relevant to universal quantum computation with rebits \cite{Rudolph_2002, Calderbank_1997}. 
Third,
 information about which part of the Hilbert space is afflicted by the worst errors provides more information with which to optimize the experimental implementation of a group of quantum gates.
\vspace*{5pt}

\textbf{Summary.}
We analyze real randomized benchmarking, where the quantum gates are taken from the real Clifford group.
The real Clifford group acting on $n$-qubits is generated by
\begin{equation*}
\cC(n) := \left\langle Z_i, H_i, CZ_{ij} \right\rangle,
\end{equation*}
where the subscripts indicate that the gate is acting on the $i$th qubit and $Z$ is the Pauli $Z$-gate, $H$ is the Hadamard gate and $CZ$ is the controlled $Z$-gate, defined respectively as, 
\begin{equation*}
Z=
\begin{pmatrix}
	1 & 0 \\ 0 & -1
\end{pmatrix}, \qquad 
H= \frac{1}{\sqrt{2}}
\begin{pmatrix}
	1 & 1 \\ 1 & -1
\end{pmatrix}, \qquad
 CZ=
\begin{pmatrix}
	1 & 0 & 0 & 0 \\ 
	0 & 1 & 0 & 0 \\
	0 & 0 & 1 & 0 \\
	0 & 0 & 0 & -1 
\end{pmatrix}.
\end{equation*}
The protocol that gives an estimate of the average fidelity between the physical and ideal implementations of these gates, denoted as $\tilde{\cC}$ and $\cC$ respectively, is given in protocol 2 in \cref{sec:RBprotocol}. 
We assume that the error quantum channel is gate and time independent throughout. 
However, we note that the methods of \citet{Wallman_2017} and \citet{Merkel2018} can be used to prove that the gate-dependent assumption can be relaxed with negligible effect on the estimate; we leave a careful and detailed proof of this to future work.
The protocol estimates  
the decay rate of the survival probability over random gate sequences of varying length $m+1$ as illustrated in \cref{fig:RB}.

\begin{figure}[ht!]
\begin{center}
	\begin{tikzpicture}[thick]
    \tikzstyle{operator} = [draw,fill=white,minimum size=3em] 
    \matrix[row sep=0.4cm, column sep=0.8cm, ampersand replacement=\&] (circuit) {
    \node (in) {$\rho$}; \& 
    \node[operator] (C1) { {$\widetilde{\cC}_{1}$}}; \&
		\node[operator] (C2) { {$\widetilde{\cC}_{2}$}}; \& 
		\node (C) { {$\ldots$}}; \& 
		\node[operator] (Cm) { {$\widetilde{\cC}_{m}$}}; \&  
		\node[operator] (Cm1) { {$\widetilde{\cC}_{m+1}$}}; \& 
		\node (out) {$E$}; \\
	  };
		\begin{pgfonlayer}{background}
        \draw[thick, ->] (in) -- (C1);
				\draw[thick, ->] (C1) -- (C2);
				\draw[thick, ->] (C2) -- (C);
				\draw[thick, ->] (C) -- (Cm);
				\draw[thick, ->] (Cm) -- (Cm1);
				\draw[thick, ->] (Cm1) -- (out);
    \end{pgfonlayer}
    \end{tikzpicture}
\end{center}
\caption{The main setup of real randomized benchmarking (see Protocol~2 for more details). For a fixed $m \in \bbN$, a sequence of $m+1$ real Clifford gates is applied to an initial quantum state $\rho$. The sequence is generated, such that in the case of its ideal implementation, it gives the identity operation. A subsequent measurement is performed given by an effect operator of a POVM, $E$, to measure the survival probability. Averaging over $M \in \bbN$ random realizations of sequences of length $m$ gives the average sequence fidelity.}%
\label{fig:RB}%
\end{figure}
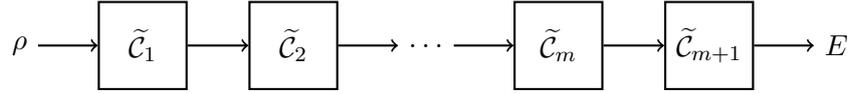

The protocol gives an approximation to the average sequence fidelity, 
\begin{equation*}
\bar{F}(m, E, \rho) = A+b^mB+c^mC,
\end{equation*}
where $A$, $B$ and $C$ depend only on the state preparation and measurement, and $b$ and $c$ depend on the noise quantum channel. Let $S_i: \cM_d \to \cM_d$, $i=a,b,c$, be defined as
\begin{align*}
S_a (\cdot) &= \tr [\cdot] \frac{\bbI}{d}, \\
S_b (\cdot) &= \frac{1}{2} (\id + \theta ) - \tr [\cdot] \frac{\bbI}{d} \qquad \text{and} \\
S_c (\cdot) &=  \frac{1}{2} (\id -\theta),
\end{align*}
then 
\begin{align*}
A &= \tr [E S_a(\rho)],  \\
B &= \tr [E S_b(\rho)] \qquad \text{and} \\
C &= \tr [E S_c(\rho)],
\end{align*}
as well as 
\begin{align*}
b &= \frac{\tr [ \cE \circ S_b(\rho)]}{\tr [S_b (\rho)]} \qquad \text{and} \\
c &= \frac{\tr[ \cE \circ S_c (\rho)]}{\tr [S_c(\rho)]},
\end{align*}
where $\cE:\cM_d \to \cM_d$ denotes the noise quantum channel.

The parameter $b$ is linearly proportional to the average rebit fidelity, where the average is taken with respect to the orthogonal group, see \cref{eq:ArealFbc} in  \cref{sec:FigureOfMerit}. Together with parameter $c$, they give the average fidelity, see \cref{eq:AFbc}. It is thus possible to obtain more fine-grained information about the physical implementation of real Clifford gates. 
\vspace*{5pt}

\textbf{Organization of the paper.} This paper starts with an introduction to the group twirl and in particular studies the twirl over the real orthogonal group. In this particular case, the exact form of a twirled quantum channel is derived. 
We establish the notion of an orthogonal 2-design and, after defining the real Clifford group, in \cref{sec:RealCliffordGroup} we show that the real Clifford group is an orthogonal 2-design. 
The derived insights into the real Clifford group are then, in the following \cref{sec:ComplexCliffordGroup}, compared to the complex case.  
\Cref{sec:HaarSample} gives a protocol on how to obtain a Haar sample from the real Clifford group in an efficient way.  
In \cref{sec:FigureOfMerit} we discuss the figures of merit of which real RB obtains estimates.
In \cref{sec:RBprotocol} we give the real RB protocol that shows how to calibrate the average sequence fidelity to experimental data. \Cref{sec:RB} uses the results established in \cref{sec:RealCliffordGroup} to derive the estimate for the average fidelity.

\section{Group twirl}
\label{sec:Twirling}

In this section we introduce the mathematical background necessary for real randomized benchmarking. 
We will first introduce the notation used throughout this paper followed by a review of the relevant representation theoretic concepts.
 
We consider an $n$-qubit system
with an underlying finite-dimensional Hilbert space
$\cH \simeq \bbC^d$, $d=2^n$. 
Denote by $\cM_d \left(\bbC\right)$ the set of complex-valued $d \times d$-matrices and as $\cM_d \left(\bbR\right)$ the set of real-valued $d \times d$-matrices. 
Every quantum state is described by a density matrix $\rho \in \cM_d \left( \bbC \right)$, with normalization $\tr\left[\rho \right] =1$ and positivity property $\rho \geq 0$. 
The set of $d$-dimensional density matrices or quantum states is denoted as $\cD_d := \left\{ \rho \in \cM_d \left(\bbC\right) \middle\vert \rho \geq 0, \tr \left[ \rho \right] = 1 \right\}$. 
A transformation of a quantum state is described by a quantum channel, which is a  completely positive trace preserving linear map $T: \cM_{d} \left(\bbC\right) \to \cM_{d} \left(\bbC\right)$.  
The Choi-Jamio{\l}kowski representation \cite{Jamiolkowski_1972} provides a one-to-one correspondence between linear maps $T:\cM_d \to \cM_{d'}$ and operators $\tau_T \in \cM_{d'd}$ via 
\begin{equation}
\tau_T = \left( \id \otimes T \right) \ketbra{\Omega}{\Omega},
\label{eq:Jamiolkowski}
\end{equation}
where $ \ketbra{\Omega}{\Omega} = \frac{1}{d}\sum_{i,j=1}^d \ketbra{ii}{jj}$ is the maximally entangled state. This operator $\tau_T$ encodes every property of the linear map $T$ and the representation shows that the set of quantum channels corresponds one-to-one to the set of bipartite quantum states which have one reduced density matrix maximally mixed \cite{heinosaari_ziman_2012}. This result will be used throughout this work.
Denote by $\cU(d)$ the unitary group acting on $\bbC^d$ and by $\cO(d)$ the real orthogonal group acting on $\bbC^d$. Moreover, $\bbI$ is the identity matrix in $\cM_{d} \left(\bbC\right)$.

Throughout this paper, we are interested in group actions on quantum channels. For an extensive review of representations of finite and compact groups, please refer to \cite{simon_1996}.
To this end consider any finite group $G$ with elements $g \in G$ and a unitary representation $\{U(g)\}_{g \in G}$ on $\bbC^d$. 
Its adjoint representation $\cU_U:G \to \End(\cM_d(\bbC))$ is defined through its action on any $X\in \cM_d(\bbC)$ as 
\begin{equation}
\cU_{U(g)}(X) = U(g)XU(g)^\ast \ \ \forall g \in G,
\label{eq:adjointrep}
\end{equation}
and may be represented as a matrix $U \otimes \bar{U} \in \cM_{d^2}$. 

Indeed, let $H$ be a general matrix group acting on some Hilbert space $\mathcal{K}$
(below, we will be interested in e.g.\ $H=\{U(g)\otimes \bar U(g) \,|\, g\in G\}$, with $\mathcal{K}=
\bbC^d \otimes \bbC^d$
).
The Hilbert space decomposes as
\begin{equation}
	\mathcal{K} \simeq \bigoplus_{i=1}^k \mathcal{K}_i \otimes \mathbb{C}^{n_i},
\end{equation}
where the sum is over  irreducible unitary representations of $H$,
$\mathcal{K}_i = \bbC^{d_i}$  carries the $i$th irreducible unitary representation, and
$n_i$ is the degeneracy of the irreducible unitary representations in $\mathcal{K}$.
Every $U\in H$ is block-diagonal with respect to this decomposition, i.e.\ of the form
\begin{equation}\label{eqn:k decomp}
	U \simeq \bigoplus_{i=1}^k U_i \otimes \mathbb{I}_{n_i \times n_i}.
\end{equation}
The \emph{commutant} $H'$ of $H$ is the algebra $H'=\{ X \,|\, [X,U]=0\>\forall\,U\in H\}$ which commutes with all elements of $H$.
By Schur's Lemma, every $X\in H'$ is of the form
\begin{equation}\label{eqn:dual decomp}
	X  \simeq \bigoplus_{i=1}^k  \mathbb{I}_{d_i \times d_i}\otimes X_i,
\end{equation}
with $d_i = \dim\mathcal{K}_i$, and $X_i$ acting on the $n_i$-dimensional space appearing on the right hand side of \cref{eqn:k decomp}.
We will mainly restrict our attention to the case where all irreducible unitary representations of $H$ on $\mathcal{K}$ are non-degenerate, i.e.\ $n_i=1$ for all $i=\{1,\ldots, k\}$.
In this case, \cref{eqn:dual decomp} takes the form
\begin{equation}
	X  \simeq \sum_{i=1}^k x_i P_i,
\end{equation}
where the $P_i$ are orthogonal projections onto the $i$th irreducible unitary representation and the $x_i\in\mathbb{C}$.

The \emph{group twirl} associated with $H$ is 
\begin{align}\label{eqn:group twirl}
	\bbT: 
	A \mapsto \int_{H} U A U^\ast  \mathrm{d} U,
\end{align}
where the integration is w.r.t.\ the Haar measure on $H$. 
In particular, if $H$ is finite, the integral is the normalized sum over the group.
The group twirl is 
(i) idempotent, 
(ii) self-adjoint (w.r.t. \ the Hilbert-Schmidt inner product), and
(iii) leaves elements $X\in H'$ of the commutant invariant.
It is thus the orthogonal projection onto $H'$.
In the non-degenerate case, one can check that this projection is given explicitly by 
\begin{equation}
	\bbT(A)
	=
	\sum_{i=1}^k \frac1{d_i}\tr(A P_i)\, P_i.
	\label{eq:TwirlComm}
\end{equation}

Clearly, the group twirl over $H$ only depends on the commutant $H'$.
Thus, if a group $S$ is such that $S'=H'$, twirling over $S$ is equivalent to twirling over $H$.
In practice, this freedom can be advantageous, if $S$ has e.g.\ smaller cardinality than $H$, or simpler implementations as a quantum circuit.
The notion of a \emph{group design} captures this relation:
A \emph{unitary $t$-design} is any group $G$ such that we have the equality of commutants
\begin{equation}\label{eqn:commutant}
	\{ U^{\otimes t} \,|\,  U \in G \}'
	=
	\{ U^{\otimes t} \,|\,  U \in \cU(d) \}'.
\end{equation}
General many-qubit unitaries do not have an efficient gate decomposition, while there are many-qubit unitary 2-designs and 3-designs that do.
This was the original motivation for introducing the notion \cite{Dankert_2009}.

Phrased this way, it is natural to generalize \cref{eqn:commutant}  to arbitrary ``reference groups'', beyond the now well-studied case of $\cU(d)$.
In particular, we will be concerned with the following case:

\begin{defn}
	Let $G$ be a matrix group acting on $\mathbb{C}^d$ for some $d$.
	Then $G$ is an \emph{orthogonal $t$-design} if we have the equality of commutants, i.e.,
	\begin{align*}
		\{ U^{\otimes t} \,|\, U \in G \}' 
		=
		\{ O^{\otimes t} \,|\, O \in \cO(d) \}',
	\end{align*}
	where $\cO(d)$ is the real orthogonal group acting on $\mathbb{C}^d$.
\end{defn}

\subsection{Group twirl over the orthogonal group}
\label{sec:TwirlingOrtho}

Throughout this paper, our main emphasis will be on orthogonal $2$-designs.
In this case, it is possible to work out the commutant easily \cite{vollbrecht_werner_2001}. Consider the unitary representation $O^{(2)}: g \mapsto O(g) \otimes O(g)$ of the orthogonal group $\cO(d)$ on $\cH = \bbC^d \otimes \bbC^d$, given as
\begin{equation}
G= \left\{ O \otimes O \middle| O \in \cO(d) \right\}.
\label{eq:Symmetry}
\end{equation}
The commutant $G'$ 
is spanned by three orthogonal projections 
\cite{vollbrecht_werner_2001, Mendl_Wolf_2009},
\begin{subequations} \label{eq:MinimalProjections}
\begin{align}
P_0 &= \ketbra{\Omega}{\Omega}, \\
P_1 &= \frac{1}{2} \left( \bbI - \bbF \right) \ \ \text{ and} \\
P_2 &= \frac{1}{2} \left( \bbI + \bbF \right) -  \ketbra{\Omega}{\Omega},
\end{align} 
\end{subequations}
where $\bbF = \sum_{i,j=1}^d \ketbra{ij}{ji}$ is the flip (or swap) operator and $\ketbra{\Omega}{\Omega} = \frac{1}{d} \sum_{i,j=1}^d \ketbra{ii}{jj}$ is the maximally entangled state. For any symmetric $X \in \cM_d(\bbC)$, we have that $\bbF X = X$, and for any antisymmetric  $X \in \cM_d(\bbC)$, we get $\bbF X = -X$. The projections thus
correspond to
multiples of the identity, antisymmetric matrices and traceless symmetric matrices respectively. 
Every density operator in the commutant must thus be in the convex hull of the corresponding normalized density matrices $\rho_i = P_i/d_i$ for $i=0,1,2$. 

The theory discussed above therefore applies to quantum channels too. To this end, let $T$ be a quantum channel on a $d$-dimensional quantum system, and let $G$ be a matrix group on $\mathbb{C}^d$.
The \emph{twirled} channel $\tilde T$ over the full real orthogonal group,
\begin{align} 
  \tilde{T}(\cdot) =&  \int_{\cO\left(d\right)} OT\left( O^\ast \cdot O\right) O^\ast  \, \mathrm{d}O,
  \label{eq:SuperTwirl}
\end{align}
can then be expressed as in \cref{eqn:group twirl}.
Using the state channel duality, we see that $\rho_0 = P_0 /d_0$ then corresponds to the ideal channel $T(\cdot) = \id$, $\rho_1 = P_1 /d_1$ corresponds to the Werner-Holevo channel given by 
\begin{equation*}
T(\cdot) = \frac{\tr{[\cdot]}\bbI - \theta}{d-1}, 
\end{equation*}
where $\theta$ denotes the usual transposition $\rho \mapsto \theta(\rho):= \rho^T$,
and $\sum_i P_i/d^2$ corresponds to the completely depolarizing channel $T(\cdot) = \tr{[\cdot]} \bbI/d$.
This yields
\begin{equation}
\tilde{T}(\cdot) = \alpha \id + \beta \frac{\bbI}{d} \tr \left[ \cdot \right] + \gamma \frac{\bbI \tr \left[ \cdot \right] - \theta}{d-1},
\label{eq:TwirledChannel}
\end{equation}
with $\alpha, \beta, \gamma \in \bbR$ satisfying $\alpha + \beta + \gamma = 1$  (which makes $\tilde{T}$ trace-preserving). As before, $\theta$ denotes the usual transposition $\rho \mapsto \theta(\rho):= \rho^T$.
This immediately follows from the correspondence between a quantum channel and its Jamio{\l}kowski state,
\begin{align*}
&\int_{\cO\left(d\right)}\left( O \otimes O \right)\tau_T\left( O \otimes O \right)^\ast \, \mathrm{d}O \\
=&\int_{\cO\left(d\right)}\left( O \otimes O \right)\left( \id \otimes T \right) \ketbra{\Omega}{\Omega} \left( O \otimes O \right)^\ast \, \mathrm{d}O \\
=&\int_{\cO\left(d\right)} \frac{1}{d}\sum_{i,j=1}^d O \ketbra{i}{j} O^\ast \otimes O  T \left( \ketbra{i}{j}\right) O^\ast \, \mathrm{d}O \\
=&\int_{\cO\left(d\right)} \frac{1}{d} \sum_{i,j=1}^d  \ketbra{i}{j} \otimes O  T \left( O^\ast \ketbra{i}{j} O\right) O^\ast \, \mathrm{d}O \\
=&\left( \id \otimes \int_{\cO\left(d\right)} O  T \left( O^\ast \cdot O\right) O^\ast \, \mathrm{d}O \right) \ketbra{\Omega}{\Omega}. 
\end{align*}
 A twirl over a quantum channel and the corresponding twirl over its Jamio{\l}kowski quantum state are thus equivalent descriptions. We will, however, focus on the twirling of quantum channels throughout the rest of this paper, where we will use that the theory allows us to give an explicit representation of the image of a channel under an orthogonal twirl.

\section{Real Clifford group}
\label{sec:RealCliffordGroup}

We now define the real Clifford group and show that it is an orthogonal 2-design.
Following \cite{nebe_2006}, we define the real Pauli group $E(n)$ on $n$ qubits as the $n$-fold tensor power 
\begin{equation}
E(n):= \left\langle E(1)^{\otimes n} \right\rangle,
\end{equation}
where $E(1)$ is just the real Pauli group on 1 qubit defined as
\begin{equation}
E(1) := \left\langle X := 
\begin{pmatrix}
	0 & 1 \\ 1 & 0
\end{pmatrix} ,Z :=
\begin{pmatrix}
	1 & 0 \\ 0 & -1
\end{pmatrix} \right\rangle. 
\end{equation}
$E(n)$ is thus generated by tensor products of the Pauli matrices $X$ and $Z$ with $2 \times 2$ identity matrices $\bbI_2$.

\begin{defn}[Real Clifford group]
The real Clifford group $\cC(n)$ is the normalizer in $\cO\left(2^n\right)$ of the real Pauli group $E(n)$, i.e.
\begin{equation}
\cC(n) := \left\{ O \in \cO(2^n) \middle\vert OE(n) = E(n)O \right\}.
\end{equation}
\end{defn}
In the simple case when $n=1$ the real Clifford group is generated by
\begin{equation}
\cC(1) = \left\langle Z:=
\begin{pmatrix}
	1 & 0 \\ 0 & -1
\end{pmatrix},H:= \frac{1}{\sqrt{2}}
\begin{pmatrix}
	1 & 1 \\ 1 & -1
\end{pmatrix} \right\rangle, 
\label{eq:C1}
\end{equation}
and in the case when $n=2$ the real Clifford group is
\begin{equation}
\cC(2) = \left\langle  \cC(1) \otimes \cC(1), CZ:=
\begin{pmatrix}
	1 & 0 & 0 & 0 \\ 
	0 & 1 & 0 & 0 \\
	0 & 0 & 1 & 0 \\
	0 & 0 & 0 & -1 
\end{pmatrix} \right\rangle, 
\label{eq:C2}
\end{equation}
where $H$ is the Hadamard gate and $CZ$ is the controlled $Z$-gate. See \cite{nebe_2006} for a thorough discussion of the real Clifford group.

\begin{thm}
\label{thm:irreps}
The representation $O^{(2)}: g \mapsto O(g) \otimes O(g)$ of the real Clifford group $\cC(n)$ decomposes into three non-degenerate irreducible unitary representations.
\end{thm}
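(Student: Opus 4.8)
The plan is to compute the dimension of the commutant
$\{O(g)\otimes O(g)\,:\,g\in\cC(n)\}'$ and show it equals $3$. Since $\cC(n)\subseteq\cO(2^n)$, this commutant contains $\{O\otimes O\,:\,O\in\cO(2^n)\}'$, which by \cref{eq:MinimalProjections} is the $3$-dimensional span of $P_0,P_1,P_2$; so the dimension is at least $3$, and it suffices to prove it is at most $3$. The reason that landing on exactly $3$ yields the theorem is that the commutant dimension equals $\sum_i m_i^2$, where the $m_i$ are the multiplicities of the distinct irreducible constituents of $O^{(2)}$, and $3$ admits no such decomposition other than $1^2+1^2+1^2$; hence $O^{(2)}$ splits into three pairwise inequivalent (i.e.\ non-degenerate) irreducible unitary subrepresentations, which are then forced to be the supports of $P_0$, $P_1$, $P_2$.

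For the upper bound I would argue in the real Pauli basis, mimicking the standard proof that the complex Clifford group is a unitary $2$-design. Fixing one matrix representative for each element of $E(n)/\{\pm\bbI\}\cong\bbF_2^{2n}$ gives $4^n$ real Pauli operators $\{P_a\}_{a\in\bbF_2^{2n}}$ which form an orthogonal basis of $\cM_{2^n}(\bbR)$, hence of $\cM_{2^n}(\bbC)$, so $\{P_a\otimes P_b\}_{a,b}$ is a basis of $\cM_{2^n}\otimes\cM_{2^n}$. The real Pauli operators satisfy the same $\bbF_2$-symplectic commutation relations $P_pP_a=(-1)^{\langle p,a\rangle}P_aP_p$ as the complex Pauli group, with $\langle\cdot,\cdot\rangle$ the standard nondegenerate symplectic form on $\bbF_2^{2n}$. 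Since $E(n)\subseteq\cC(n)$, any $X$ in the commutant must commute with every $P_p\otimes P_p$, and expanding $X$ in the basis $\{P_a\otimes P_b\}$ this forces the coefficient of $P_a\otimes P_b$ to vanish unless $a=b$; thus $X=\sum_a c_a\,P_a\otimes P_a$. Conjugation by a Clifford element $O\in\cC(n)$ sends $P_a$ to $\pm P_{\psi_O(a)}$, where $\psi_O$ is the induced automorphism of $E(n)/\{\pm\bbI\}\cong\bbF_2^{2n}$; the sign squares away in $P_a\otimes P_a$, so $X$ is $\cC(n)$-invariant iff $a\mapsto c_a$ is constant on the orbits of the group $\Psi:=\{\psi_O:O\in\cC(n)\}$ acting on $\bbF_2^{2n}$. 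Hence $\dim(\text{commutant})=\#\{\text{orbits of }\Psi\text{ on }\bbF_2^{2n}\}$, and the theorem reduces to showing $\Psi$ has exactly three orbits.

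The extra structure that cuts $\Psi$ down from all of $\mathrm{Sp}(2n,\bbF_2)$ is that conjugation by a real orthogonal matrix preserves the transpose, hence preserves whether a real Pauli is symmetric or antisymmetric. This "symmetry type" is exactly the value of the quadratic form $q(a)=\sum_{i=1}^n a_{x,i}a_{z,i}$ on $\bbF_2^{2n}$ (one checks $P_a^T=(-1)^{q(a)}P_a$ for the standard representatives $P_a=\bigotimes_i X^{a_{x,i}}Z^{a_{z,i}}$), and the bilinear form associated with $q$ is precisely the symplectic form above; in particular $q$ is nondegenerate of hyperbolic ($+$) type. Therefore $\Psi\subseteq\mathrm{O}(q)\subseteq\mathrm{Sp}(2n,\bbF_2)$, and by Witt's theorem $\mathrm{O}(q)$ has exactly three orbits on $\bbF_2^{2n}$: the zero vector, the nonzero $q$-isotropic vectors, and the $q$-anisotropic vectors. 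It remains to rule out that $\Psi$ is orbit-strictly-smaller, i.e.\ to show $\Psi$ is already transitive on the nonzero isotropic vectors and on the anisotropic vectors. This is the part I expect to be the main obstacle: it amounts to identifying the image of the real Clifford group in $\mathrm{Sp}(2n,\bbF_2)$ as the full orthogonal group $\mathrm{O}_{2n}^{+}(\bbF_2)$, which one can take from the structure theory of the real Clifford group \cite{nebe_2006}, or verify directly by tracking the action of the generators $Z_i$, $H_i$, $CZ_{ij}$ (and the derived $\mathrm{CNOT}$ and $\mathrm{SWAP}$ gates) on $\bbF_2^{2n}$ and invoking a standard generating set of $\mathrm{O}_{2n}^{+}(\bbF_2)$. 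Granting this, $\Psi$ has three orbits, so the commutant is $3$-dimensional, and by the first paragraph $O^{(2)}$ decomposes into exactly three non-degenerate irreducible unitary representations — necessarily the one-dimensional (maximally entangled) subspace, the antisymmetric subspace, and the traceless symmetric subspace, carrying $P_0$, $P_1$, $P_2$ respectively.
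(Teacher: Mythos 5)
Your proposal is correct in outline and takes a genuinely different route from the paper: the paper proves \cref{thm:irreps} purely by citation to \cite[Theorem 6.8.1]{nebe_2006}, whereas you give the orthogonal analogue of the standard Pauli-basis proof that the complex Clifford group is a unitary 2-design. Your reductions are sound: commuting with $P_p\otimes P_p$ for all $p$ does force $X=\sum_a c_a\,P_a\otimes P_a$ by non-degeneracy of the symplectic form; conjugation by a real orthogonal matrix preserves transposition and hence the quadratic form $Q$ (since $P_a^T=(-1)^{Q(a)}P_a$), so the induced action lands in $O^+(2n,2)$; Witt's theorem gives at most three orbits (zero, nonzero singular, nonsingular); and since $3=1^2+1^2+1^2$ is the only decomposition of $3$ into positive squares, a $3$-dimensional commutant forces three pairwise inequivalent multiplicity-one irreducible constituents, whose supports must be the ranges of the $P_i$ from \cref{eq:MinimalProjections}. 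The one step you correctly flag as outstanding --- that the image of $\cC(n)$ in $\operatorname{Sp}(2n,2)$ is all of $O^+(2n,2)$, or at least acts transitively on each of the two nonzero classes --- is precisely the structural fact the paper itself invokes in \cref{sec:HaarSample} to justify the sampling protocol, and it is available in \cite{nebe_2006}; alternatively it can be checked directly on the generators $H_i$ and $CZ_{ij}$ (together with the derived CNOT), though that verification is not trivial and you have not carried it out. What your approach buys, beyond being self-contained modulo that one group-theoretic fact, is a transparent explanation of why exactly one extra decay channel appears relative to standard RB: the quadratic form $Q$ splits the single nonzero $\operatorname{Sp}(2n,2)$-orbit underlying the complex Clifford 2-design into two $O^+(2n,2)$-orbits.
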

\begin{proof}
See \cite[theorem 6.8.1.]{nebe_2006} and proofs therein. 
\end{proof}

We are now equipped to give the theorem that acts as the main mathematical ingredient for real randomized benchmarking.
\begin{thm}
\label{thm:CliffOrtho}
The real Clifford group $\cC(n)$ is an orthogonal 2-design.
\end{thm}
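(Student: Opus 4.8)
The plan is to take the definition of an orthogonal $2$-design at face value and reduce the statement to a dimension count. Write $d=2^n$, and set $G_{\cO}=\{O\otimes O : O\in\cO(d)\}$ and $G_{\cC}=\{U\otimes U : U\in\cC(n)\}$; the claim is the equality of commutants $G_{\cC}'=G_{\cO}'$. Since by definition $\cC(n)$ is a subgroup of $\cO(2^n)$, we have $G_{\cC}\subseteq G_{\cO}$, and hence $G_{\cO}'\subseteq G_{\cC}'$: anything commuting with every element of the larger set commutes with every element of the smaller one. So the only inclusion carrying content is the reverse one, $G_{\cC}'\subseteq G_{\cO}'$.

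First I would record that $\dim G_{\cO}'=3$: by \cref{eq:MinimalProjections} the commutant $G_{\cO}'$ is spanned by the three orthogonal projections $P_0=\ketbra{\Omega}{\Omega}$, $P_1=\tfrac12(\bbI-\bbF)$, $P_2=\tfrac12(\bbI+\bbF)-\ketbra{\Omega}{\Omega}$. Next I would invoke \cref{thm:irreps}: the representation $O^{(2)}:g\mapsto O(g)\otimes O(g)$ of $\cC(n)$ decomposes into exactly three pairwise non-isomorphic (``non-degenerate'') irreducible unitary subrepresentations. By Schur's Lemma, the commutant of a multiplicity-free representation with $k$ irreducible summands has dimension $k$ (this is exactly \cref{eq:TwirlComm} with all $n_i=1$), so $\dim G_{\cC}'=3$ as well. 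A subspace of a finite-dimensional vector space that contains another subspace of the same dimension must equal it; applying this to $G_{\cO}'\subseteq G_{\cC}'$ with both of dimension $3$ gives $G_{\cC}'=G_{\cO}'$, which is precisely the statement that $\cC(n)$ is an orthogonal $2$-design.

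The entire weight of the argument therefore rests on the representation-theoretic input \cref{thm:irreps}, which we simply cite from \cite{nebe_2006}; everything else is the elementary observation that a subgroup's commutant is larger, plus a dimension match. If one wanted a more self-contained route, the alternative would be to verify directly that each of the three $\cO(d)$-invariant subspaces is already $\cC(n)$-irreducible, namely $\bbC\ket{\Omega}$, the antisymmetric subspace of dimension $\binom{d}{2}$, and the traceless symmetric subspace of dimension $\binom{d+1}{2}-1$, with the bookkeeping $1+\binom{d}{2}+\bigl(\binom{d+1}{2}-1\bigr)=d^2$ confirming nothing is missed. I expect the genuine obstacle to be exactly this irreducibility of the (traceless) symmetric block under the real Clifford group — the trivial and antisymmetric blocks being comparatively painless — and that is the nontrivial content that \cref{thm:irreps} packages for us.
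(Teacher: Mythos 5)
Your proof is correct and follows essentially the same route as the paper's: both use the subgroup inclusion to get one containment of commutants and then invoke \cref{thm:irreps} to match dimensions (three on each side) and conclude equality. Your version simply spells out the dimension count via Schur's Lemma more explicitly than the paper does.
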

\begin{proof}
The real Clifford group $\cC(n)$ is a subgroup of the real orthogonal group $\cO(2^n)$. Its commutant therefore contains the commutant of the real orthogonal group. By \cref{thm:irreps} these two commutants have the same dimensions, and must thus be equal. This proves the claim.
\end{proof}
\Cref{thm:CliffOrtho} will be the main ingredient for real RB. It is, however, possible to also prove the following interesting fact about the real Clifford group.
\begin{prop}
The real  Clifford group $\cC(n)$ is an orthogonal 3-design, but it is not an orthogonal 4-design.
\end{prop}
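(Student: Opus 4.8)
The plan is to mimic the proof of \cref{thm:CliffOrtho} but one level up in the tensor power: compare the commutant of $\{O^{\otimes 3} \mid O \in \cC(n)\}$ with that of $\{O^{\otimes 3} \mid O \in \cO(2^n)\}$, and then do the same at the fourth tensor power to obtain the negative statement. For the orthogonal group the relevant commutants are classical: by the first fundamental theorem of invariant theory for $\cO(d)$ (Brauer's theorem), the commutant of $O^{\otimes t}$ is spanned by the operators implementing the pairings of $2t$ tensor legs into $t$ pairs, i.e.\ by the Brauer algebra $\mathfrak{B}_t(d)$. For $t=3$ these $\tfrac{(2t)!}{2^t t!}=15$ pairing operators are linearly independent as long as $d=2^n$ is large enough (they are independent for $d\ge t$, so certainly for all $n\ge 2$, and the $n=1$ case can be checked by hand), so $\dim\{O^{\otimes 3}\}' = 15$. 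For $t=4$ one gets $105$ pairings, again independent for $d\ge 4$. So the content of the proposition is: the real Clifford commutant at level $3$ is exactly $15$-dimensional, but at level $4$ it is strictly larger than $105$.

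First I would establish the $3$-design claim. Since $\cC(n)\subseteq\cO(2^n)$, the inclusion $\{O^{\otimes 3}\mid O\in\cO(2^n)\}' \subseteq \{O^{\otimes 3}\mid O\in\cC(n)\}'$ is automatic, so it suffices to show the Clifford commutant has dimension at most $15$. The cleanest route is to cite the representation-theoretic analysis of the real Clifford group in \cite{nebe_2006}: one needs that $O^{(3)}:g\mapsto O(g)^{\otimes 3}$ decomposes with the same number of irreducible constituents, counted with multiplicity in the sense that $\sum_i n_i^2 = 15$, as the orthogonal group does on $(\bbC^d)^{\otimes 3}$. Concretely $\dim\{O^{\otimes 3}\mid O\in G\}' = \frac{1}{|G|}\sum_{g\in G}|\tr O(g)|^{6}$ for finite $G$ (and the analogous Haar integral equals $15$ for $\cO(d)$, $d\ge 3$), so the statement reduces to the character identity
\begin{equation*}
\frac{1}{|\cC(n)|}\sum_{g\in\cC(n)}\bigl(\tr O(g)\bigr)^{6} = 15 ,
\end{equation*}
which is exactly the kind of weight-enumerator computation carried out for $\cC(n)$ in \cite{nebe_2006} (the real Clifford group there is shown to be a ``Clifford-Weil group'' whose invariants of low degree match the orthogonal group's). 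I would invoke that computation rather than redo it.

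For the failure of the $4$-design property, I would again use the character formula: the Clifford $4$th-moment commutant has dimension $\frac{1}{|\cC(n)|}\sum_{g}(\tr O(g))^{8}$, and it suffices to exhibit one $g\in\cC(n)$ whose presence forces this sum to exceed $105$ — equivalently, to exhibit one operator commuting with all of $\cC(n)^{\otimes 4}$ that is not in the span of the $105$ Brauer pairings. The natural candidate is built from the real Pauli group: the projector onto the $+1$ eigenspace of the stabilizer-type operator $\frac{1}{|E(n)|}\sum_{P\in E(n)} P^{\otimes 4}$ (or the associated sign/parity operators), which is $\cC(n)$-invariant because $\cC(n)$ normalizes $E(n)$, but which has no analogue in the orthogonal group's commutant. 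I would verify non-membership either by a dimension count of the degree-$8$ invariants of $\cC(n)$ from \cite{nebe_2006} — which are known to be strictly richer than those of $\cO(d)$, the extra invariant being essentially a complete weight enumerator term of weight $8$ that is not an orthogonal invariant — or, for a self-contained argument, by explicit computation in the smallest case $n=1$ (where $|\cC(1)|=16$) and then noting the tensor-power structure $E(n)=E(1)^{\otimes n}$ propagates the extra invariant to all $n$.

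The main obstacle is the $t=4$ negative direction: the positive statements are ``dimension $\le$ known value'' and follow from matching a moment against the orthogonal group's, which is precisely what the weight-enumerator machinery of \cite{nebe_2006} delivers, but for the negative statement I must actually produce the extra commutant element and prove it is new. I expect the honest way to do this is to pin down $\dim\{O^{\otimes 4}\mid O\in\cC(n)\}'$ exactly — a single finite character sum, computable once and for all for small $n$ and then extended via the tensor structure — and show it is $\ge 106$; locating the clean description of that $106$th-dimension invariant (a Pauli-parity projector not expressible through the flip $\bbF$, the maximally entangled projector $\ketbra{\Omega}{\Omega}$, and their higher-$t$ Brauer-diagram cousins) is the crux of the argument.
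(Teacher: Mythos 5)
Your overall strategy is the same as the paper's: both the $3$-design claim and the failure at $t=4$ are, in the end, facts about the low-degree invariants of $\cC(n)$ established by Nebe, Rains and Sloane, and the paper's proof consists of nothing more than two citations (the notes below theorem 6.8.1 of \cite{nebe_2006} for $t=3$, and corollary 4.13 of \cite{nebe_2000} for $t=4$). Your Brauer-algebra framing of the orthogonal commutant and the reduction to the character identity $\frac{1}{|\cC(n)|}\sum_g (\tr O(g))^6 = \int_{\cO(d)}(\tr O)^6\,\mathrm{d}O$ is correct and is a useful way to make the comparison concrete; your caveat about linear dependence of the $15$ pairing operators at $d=2$ is also the right one to flag, since the design property is equality of commutants, not equality with the generic Brauer dimension.

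The genuine gap is in the negative direction. You correctly identify the candidate extra commutant element $\frac{1}{|E(n)|}\sum_{P\in E(n)}P^{\otimes 4}$ (which commutes with $\cC(n)^{\otimes 4}$ because the signs acquired under conjugation cancel at the fourth tensor power), but you never prove it lies outside the span of the $105$ Brauer operators — you yourself call this the crux and leave it open. Your proposed fallback of computing the $n=1$ case and "propagating via the tensor-power structure" is problematic precisely because $d=2$ is where the Brauer operators are maximally degenerate: an extra invariant exhibited at $n=1$ does not automatically remain outside the (much larger, now linearly independent) Brauer span at $n\ge 2$, so the propagation step would need its own argument. This is exactly the content supplied by \cite[corollary 4.13]{nebe_2000}, which shows the degree-$8$ invariant ring of $\cC(n)$ is strictly larger than that of $\cO(2^n)$ for all $n$; to complete your proof you should either cite that result, as the paper does, or carry out the dimension count $\frac{1}{|\cC(n)|}\sum_g(\tr O(g))^8 > \int_{\cO(d)}(\tr O)^8\,\mathrm{d}O$ in a way that is uniform in $n$.
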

\begin{proof}
See \cite[notes below theorem 6.8.1]{nebe_2006} for the fact that the real Clifford group $\cC(n)$ is an orthogonal 3-design. See \cite[corollary 4.13]{nebe_2000} for the fact that it is not an orthogonal 4-design, where it is shown that in this case the commutant has an additional element to it. 
\end{proof}

\section{Complex Clifford group}
\label{sec:ComplexCliffordGroup}

The real Clifford group shares the properties observed in the last chapter with its complex counterpart. The complex Clifford group is a unitary 2-design, a unitary 3-design, but fails to be an exact unitary 4-design \cite{webb2015clifford,zhu2017multiqubit,kueng2015qubit,Helsen_2016,Zhu_Kueng_Grassl_Gross_2016,gross2017schur}. This, however, is not a coincidence. To this end, let us first define the complex Clifford group.

The Pauli group on one qubit $\cP(1)$ is defined as the group generated by
\begin{equation}
\cP(1) := \left\langle X,Z,iI \right\rangle,
\label{eq:PauliGroup1}
\end{equation}
where $I,X,Y,Z$ are the standard Pauli matrices given as
\begin{equation}
I := 
\begin{pmatrix}
	1 & 0 \\ 0 & 1
\end{pmatrix},
\ \
X :=
\begin{pmatrix}
	0 & 1 \\ 1 & 0
\end{pmatrix},
\ \
Y := 
\begin{pmatrix}
	0 & -i \\ i & 0
\end{pmatrix} \text{ and }
\ \
Z :=
\begin{pmatrix}
	1 & 0 \\ 0 & -1 
\end{pmatrix}.
\label{eq:PauliMatrices}
\end{equation}
The Pauli group on $n$ qubits is defined to be
\begin{equation}
\cP (n) := \cP(1)^{\otimes n}.
\label{eq:PauliGroupn}
\end{equation}
\begin{defn}[Complex Clifford group]
The complex Clifford group $\cX(n)$ is the group-theoretic normalizer in the unitary group $\cU(2^n)$ of the Pauli group $\cP(n)$, i.e.
\begin{equation}
\cX(n) := \left\{ U \in \cU(2^n) \middle\vert U\cP(n) = \cP(n)U \right\}.
\end{equation}
\end{defn}
In the simple case where $n=1$ the complex Clifford group is therefore just given as
\begin{equation}
\cX(1) = \left\langle H, P \right\rangle,
\label{eq:X1}
\end{equation}
where $H$ is the Hadamard gate defined in \cref{eq:C1} and $P$ is the $\pi/4$-phase gate given as
\begin{equation}
P := 
\begin{pmatrix}
	1 & 0 \\ 0 & i
\end{pmatrix}.
\label{eq:HandP} 
\end{equation}
In the simple case when $n=2$ the complex Clifford group is just given as
\begin{equation}
\cX(2)= \left\langle \cX(1) \otimes \cX(1), CZ \right\rangle,
\label{eq:X2}
\end{equation}
where $CZ$ is again the controlled $Z$-gate defined in \cref{eq:C2}.

The real Clifford group is therefore a subgroup of the complex Clifford group, $\cC(n) \subset \cX(n)$, and we necessarily have that for any $t\in \bbN$,
\begin{equation*}
\{ O^{\otimes t}| O \in \cC(n) \}' \supset \{ U^{\otimes t}| U \in \cX(n) \}'.
\end{equation*}
Similarly, the orthogonal group is a subgroup of the unitary group, $\cO(2^n) \subset \cU(2^n)$, and we thus have that
\begin{equation*}
\{ O^{\otimes t}| O \in \cO(2^n) \}' \supset \{ U^{\otimes t}| U \in \cU(2^n) \}'.
\end{equation*}
In the special case $t=2$ (and in fact $t=3$), we get the following correspondence:
\begin{align*}
& \{ O^{\otimes 2}| O \in \cC(n) \}' && \mkern-100mu \supset  && \mkern-100mu\{ U^{\otimes 2}| U \in \cX(n) \}' \\
& \qquad \parallel &&  \mkern-100mu && \mkern-100mu \qquad \parallel \\
& \{ O^{\otimes 2}| O \in \cO(2^n) \}' && \mkern-100mu \supset  && \mkern-100mu \{ U^{\otimes 2}| U \in \cU(2^n) \}'
\end{align*}
In \cite{nebe_2006}, using the language of invariant harmonic polynomials, it was shown that any harmonic polynomial $p_A$ that is invariant w.r.t. the complex Clifford group, must be invariant w.r.t. any of its subgroups, including the real Clifford group. If we decompose the harmonic polynomial into  its real  and imaginary  parts, then the restrictions $p_{\text{Re}(A)}$ and $p_{\text{Im}(A)}$ must be invariant harmonic polynomials of the real Clifford group.
Unfortunately, the resulting real polynomials may turn out to be zero. It is therefore not possible to infer the absence of harmonic invariants of $\cX(n)$ from  the absence of real harmonic invariants of $\cC(n)$. However, it explains the observation regarding the real and complex Clifford group and their $t$-design properties for $t=2,3$.  

If some family of matrix groups $G$ acting on $\bbC^d$ fails to be a unitary $t$-design, it might still turn out to be useful for RB. 
If its commutant,
$
\{ U^{\otimes t} | U \in G \}',
$
has $l$ additional elements to it than the commutant of the unitary group, 
$
\{ U^{\otimes t} | U \in \cU(d) \}',
$
and there are only $l = O(1)$ additional elements to the commutant, then we may term the family of matrix groups $G$ an \textit{algebraic almost $t$-design}. 

An example of an algebraic almost unitary $t$-design is the family of Clifford groups, which form an algebraic almost 4-design with one additional generator in its commutant compared to the unitary group \cite{Zhu_Kueng_Grassl_Gross_2016, Helsen_2016, Helsen_2017}. 
As another example, the real Clifford group is an algebraic almost unitary 2-design (and an exact orthogonal 2-design, as discussed above). 
The dihedral-CNOT family of groups provides yet another example~\cite{Cross_2016}.

The $l$ additional factors in the commutant are useful for RB because they yield $l$ additional decay terms in the average fidelity. 
A successful fit to the multi-exponential decay in a benchmarking experiment would then yield finer-grained information about the average error rate by finding the average fidelity associated to the projections onto each of the commutant algebras. 
In these scenarios, however, stability is an issue in the case of large $l$ as fitting a multi-exponential decay is in general poorly conditioned~\cite{Groen_Moor_1987}. 
For the case discussed in the most detail in this paper, the real Clifford group, there is only one extra decay term, and successful data processing methods can be employed to fit the model that we derive below with an efficient number of measurements~\cite{Granade_2017, Granade_2016}.
Another avenue for dealing with multi-exponential decays is to consider state preparations and measurements that optimize the contrast between various competing terms, perhaps even canceling all but one (or a constant fraction of) the exponential decay terms~\cite{Fogarty2015, CarignanDugas_2015}. 
We explore this idea in more detail in \cref{sec:RB}. 

The next section answers the question of how to obtain a Haar sample from the real Clifford group in an efficient way. This is an important ingredient for the real RB protocol.

\section{Haar sample from the real Clifford group}
\label{sec:HaarSample}

\subsection{Structure of the real Clifford group, and orthogonal transformations}

Here, we summarize results and notions from \cite{kerdock} and \cite[Chapter~7]{aschbacher}.

Consider the \emph{phase space} $V=\bbF_2^{2n}$.
Elements of phase space will often be written as $(p,q)\in\bbF_2^{2n}$, with $p,q\in \bbF_2^n$.
An important piece of structure for the real Clifford group \cite{kerdock} is the quadratic form
\begin{equation}
	Q\left( (p,q) \right) =  p\cdot q  = \sum_{i=1}^n p_i q_i.
\end{equation}
For $x=(p,q), x'=(p',q')$, one checks that 
\begin{equation}\label{eqn:symmetric form}
	Q(x+x')-Q(x)-Q(x')
	=  p\cdot q'  - p'\cdot q
	= [x,x'],
\end{equation}
where the square brackets denote the standard \emph{symplectic form} on phase space.
(While over $\bbF_2$, $-1=+1$, we occasionally use negative signs when these would appear for analogous calculations in odd characteristic).
The form $Q$ turns $V$ into an \emph{orthogonal space}.

A vector $x\in V$ is \emph{singular} if $Q(x)=0$.
A \emph{hyperbolic pair} is a set of two singular vectors $e,f\in V$ such that $[e,f]=1$.
A two-dimensional subspace
is a \emph{hyperbolic plane} if it is spanned by a hyperbolic pair.

A space $U\subset V$ is \emph{totally singular} if $Q$ and $[\cdot, \cdot]$ vanish on $U$.
Clearly, $U=\{(p,0) \,|\, p\in\bbF_2^n\}$ is totally singular and has dimension half of $V$.
This, by definition, means that $Q$ has \emph{Witt index $n$}, or, equivalently \emph{sign $+1$}.
A $2n$-dimensional orthogonal space in characteristic two has sign $+1$ if and only if it is isometric to the orthogonal sum of $n$ hyperbolic planes:
\begin{equation}\label{eqn:hyperbolic decomp}
	V = \bigoplus_{i=1}^n \langle \{ e_i, f_i \} \rangle, 
	\qquad 
	Q(e_i)=Q(f_i)= [e_i, e_j]=[f_i, f_j]=0, 
	\qquad
	[e_i, f_j]=\delta_{i,j}.
\end{equation}

The set of linear transformations $\operatorname{GL}(V)$ preserving such a quadratic form of positive sign (and hence, by \cref{eqn:symmetric form}, the form $[\cdot, \cdot]$) is the group $O^+(2n,2)$.
By \cref{eqn:hyperbolic decomp}, a matrix $S$ represents an element of $O^+(2n,2)$ with respect to a hyperbolic basis $\{e_1, f_1, \dots, e_n, f_n\}$ if and only if it is symplectic and its columns are singular -- i.e.\ if and only if its columns form again a hyperbolic basis.

Recall that the complex Clifford group up to Pauli operators is isomorphic to the symplectic group $\mathcal{X}(n)/P(n)\simeq \operatorname{Sp}(2n,2)$ \cite{Koenig_2014}.
That is true in the sense that for each $U\in\mathcal{X}(n)$, there exists a $S\in \operatorname{Sp}(2n,2)$ such that
\begin{equation*}
	U P(x) U^\ast \propto P(S x),
\end{equation*}
where for $x=(p,q)\in\bbF_2^{2n}$, we have defined the \emph{Pauli operator}
\begin{equation}
	P(x) = 
	i^{Q(x)}
	\bigotimes_{i=1}^n X^{q_i} Z^{p_i}. 
\end{equation}
Any two Cliffords $U$ that differ by the left- or right-action of an element of the Pauli group induce the same $S$.

If $U/P(n)$ contains a real-valued matrix, then $S\in O^+(2n,2)$, i.e., in addition to being symplectic, its columns are singular.
Conversely, any element of the real Clifford group is associated with such an $S$, which again does change under left- or right-multiplication with elements from $E(n)$.

Thus, to sample from the real Clifford group, one can proceed by 1) drawing a random element from $O^+(2n,2)$ (see below), 2) use one of the known constructions (e.g.\ \cite{Hostens_2005, Dehaene_2003}) for generating a gate sequence that implements a given symplectic matrix as a Clifford operation, and 3) multiply with a randomly chosen element of $E(n)$.

\subsection{Efficient sampling from \texorpdfstring{$O^+(2n,2)$}{O+(2n,2)}}

It remains to describe an efficient protocol for drawing an element $S$ from $O^+(2n,2)$ uniformly at random.
This will be achieved as follows: 
\vspace*{5pt}
\begin{tcolorbox}[breakable, enhanced, colback=white, title= {\textbf{Protocol 1:} Sampling($O^+(2n,2)$)}]
\begin{description}
	\item[Initialization]
	Choose a basis
	$\mathcal{B}_1$
	of $\bbF_2^{2n}$.

	\item[Iterate for $i=1$ to $n$]
	Choose random linear combinations $x$ of the vectors in $\mathcal{B}_i$ until $x$ is non-zero and singular.
	Set $e_i = x$.
	Choose random linear combinations $y$ of the vectors in $\mathcal{B}_i$ until $[e_i,y]=1$.
	If $y$ is singular, set $f_i =y$. 
	Else, set $f_i = e_i+y$.
	Choose a basis $\mathcal{B}_{i+1}$ for $\langle \{e_j, f_j\}_{j=1}^i \rangle^\perp$.

	\item[Result] Return matrix $S$, with columns given by $e_1, f_1, e_2, f_2 \dots, e_n, f_n$.
\end{description}
\end{tcolorbox}
\vspace*{5pt}
The following statements are true for this construction:
\begin{enumerate}
	\item
	By definition, the span of $\mathcal{B}_1$ is an orthogonal space of sign $+1$ and dimension $2n$.
	\item
	Assume $V_i=\langle \mathcal{B}_i\rangle $ spans an orthogonal space of sign $+1$ and dimension $2(n-i+1)$. 
	Then, in expectation, one will find a non-zero singular $x$ after no more than $4$ attempts.
	To see this, let $\{e_j', f_j'\}_j$ be a hyperbolic basis for $V_i$.
	Then $x$ will be of the form 
	\begin{align*}
		x=\sum_{j=1}^{n-i+1} \left(p_j e_j' + q_j f_j'\right),
	\end{align*}
	with the $p_j, q_j$ drawn uniformly at random.
	Hence
	\begin{align*}
		Q(x)=
		\sum_{j=1}^{n-i+1} \left(p_j Q(e_j') + q_j Q(f_j')+ p_j q_j [e_j', f_j']\right)
		=
		\sum_{j=1}^{n-i+1}  p_j q_j, 
	\end{align*}
	which is zero with probability at least $\frac12$, while at least $1-2^{-n}\geq\frac12$ of these cases correspond to non-zero $x$.
	\item
	Under the same assumption as before, a vector $y$ with $[e_i,y]=1$ will be found after an expected number of 2 attempts.
	This is because $|\ker (y\mapsto [e_i,y])|=2^{\dim(V_i) -1}$ and thus exactly half of all vectors in $V$ do not lie in the kernel.
	\item
	The vectors $\{e_i, f_i\}$ form a hyperbolic pair.
	Indeed, $e_i$ is singular by construction.
	If $y$ is singular, so is $f_i$.
	If $Q(y)=1$, then
	\begin{align*}
		Q(f_i)=Q(e_i+y)=Q(e_i)+Q(y)+[e_i,y]=0+1+1=0.
	\end{align*}
	Also, $[e_i, y]=[e_i, y+e_i]=1$.
	\item
	The basis $\mathcal{B}_{i+1}$ of $\langle \{e_j, f_j\}_{j=1}^i \rangle^\perp$ describes the solution space of a set of liner equations over $\bbF_2$, and can thus be found efficiently.
	By \cref{eqn:hyperbolic decomp}, it spans an orthogonal space of sign $+1$ and dimension $2(n-(i+1)+1)$.
\end{enumerate}

Hence, by induction, the columns of $S$ form a hyperbolic basis of $\bbF_2^{2n}$, and every such basis is equally likely  to arise this way.
The above procedure thus samples uniformly from $O^+(2n,2)$.

The next chapters use the result that the real Clifford group is an orthogonal 2-design to analyze real RB using gates from the real Clifford group.

\section{Figures of merit}
\label{sec:FigureOfMerit}
The main goal of RB is to quantify how close a physical quantum channel $\widetilde{\cC}$ is to the ideal quantum gate $\cC$. In order to do so, we seek a figure of merit assessing this quality in an efficient way. We can always write the physical quantum channel as a composition of the ideal quantum channel with an error quantum channel, $\widetilde{\cC} = \cC \circ \cE$ \cite{heinosaari_ziman_2012}. We assess the quality of the physical quantum channel using the average fidelity
\begin{equation}
\bar{F} \left(\cC,\widetilde{\cC}\right) = \int_{\cU (d)} \tr \left[ \cC \left( U\ketbra{0}{0} U^\ast \right) \widetilde{\cC} \left( U \ketbra{0}{0} U^\ast \right)  \right] \, \mathrm{d}U,
\label{eq:AFa}
\end{equation}
and the average rebit fidelity
\begin{equation}
\bar{F}^{\bbR} \left(\cC,\widetilde{\cC}\right) = \int_{\cO (d)} \tr \left[ \cC \left( O\ketbra{0}{0} O^\ast \right) \widetilde{\cC} \left( O \ketbra{0}{0} O^\ast \right)  \right] \, \mathrm{d}O.
\label{eq:ArealFa}
\end{equation}
Please notice that in the case of the average rebit fidelity, the average is taken with respect to the orthogonal group. The fidelity is thus averaged over rebits. 
For quantum gates $\cC(\cdot) = C \cdot C^\ast$ with $C$ unitary, this simplifies to
\begin{align}
\bar{F} \left(\cC,\widetilde{\cC}\right) &= \int_{\cU (d)} \tr \left[ C \left( U\ketbra{0}{0} U^\ast \right) C^\ast C \cE \left( U\ketbra{0}{0}U^\ast \right) C^\ast \right] \, \mathrm{d}U \nonumber \\
&= \int_{\cU (d)} \langle 0\rvert U^\ast \cE \left( U \ketbra{0}{0} U^\ast \right) U \ket{0} \, \mathrm{d}U = \bar{F} \left(\cE, \id \right),
\label{eq:AFb}
\end{align}
and similarly for the average rebit fidelity to
\begin{equation}
\bar{F}^{\bbR} \left(\cC,\widetilde{\cC}\right) = \int_{\cO (d)} \langle0\rvert O^\ast \cE \left( O \ketbra{0}{0} O^\ast \right) O \ket{0} \, \mathrm{d}O = \bar{F}^{\bbR} \left(\cE, \id \right).
\label{eq:ArealFb}
\end{equation}
These are the quantities that are related to the two parameters, which RB can estimate,
\begin{equation}
\bar{F} \left(\cE, \id \right)= \frac{b \left( d^2+d-2 \right) + c d\left( d-1  \right) + 2\left(d+1\right)}{2d\left( d+1 \right)}
\label{eq:AFbc}
\end{equation}
as well as
\begin{equation}
\bar{F}^{\bbR}  \left(\cE, \id \right)=  \frac{b\left(d-1\right)+1}{d}.
\label{eq:ArealFbc}
\end{equation}
For real density matrices the action of the twirled channel thus reduces to the action of the depolarizing channel with parameter $b$, and the above average gate fidelities can be interpreted as fidelities restricted to the respective commutant spaces. 
This makes our analysis especially interesting when considering quantum computations on rebits. 
It has been shown that universality holds in this case \cite{Rudolph_2002} and the real Clifford group now plays the role of the complex Clifford group when studying stabilizer circuits \cite{Calderbank_1997}. 

The next chapter gives the real RB protocol, which has to be executed to quantitatively describe the quality of a sequence of physical quantum gates taken from the real Clifford group. 

\section{Real randomized benchmarking protocol}
\label{sec:RBprotocol}

The real randomized benchmarking protocol is given in the following. 
The real RB protocol considers quantum gates taken from the real Clifford group. 
We assume that the error quantum channel is both gate and time independent. 
We follow the notation of \cite{Magesan_2011}.

We will first describe the protocol for a given state preparation $\rho$, sequence length $m$, and final measurement $E$. 
This defines a protocol that can be repeated many times to obtain data with those labels, $(m, E, \rho)$. 
Averaging these data gives a \textit{fidelity decay curve}, which in expectation is a function only of these three data labels and the noise channel, which is assumed to be gate and time independent.
This forms the core of real RB. 
In the subsequent section, we will show one way to process these data to obtain accurate estimates of the parameters of the decay curve without having to fit multi-exponential decays, which is possible but in practice quite challenging. 
In the following, we give the real RB protocol:
\vspace*{5pt}
\begin{tcolorbox}[breakable, enhanced, colback=white, title= {\textbf{Protocol 2:} RealRB($m,E,\rho$)}]
\begin{description}
\item[Step 1]
Fix a positive integer $m \in \bbN$ that varies with every loop. 

\item[Step 2]
Generate a sequence of $m+1$ quantum gates taken from the real Clifford group, i.e., $\cC_1, \ldots, \cC_{m+1}$, where $\cC_j(\cdot) = C_j \cdot C_j^\ast$, $C_j \in \cC(n)$, for $j=1,\ldots,m+1$. The first $m$ quantum gates, $\cC_1, \ldots, \cC_m$, are chosen independent and uniformly at random from the real Clifford group. 
The final quantum gate, $\cC_{m+1}$,  is chosen from the real Clifford group, such that the net sequence (if realized without errors) is the identity operation,
\begin{equation}
\cC_{m+1} \circ \cC_{m} \circ \ldots \circ \cC_2 \circ \cC_1 = \id,
\end{equation}
where $\circ$ represents composition.  The entire sequence is therefore given by 
\begin{equation}
\cS_m = \bigcirc^{m+1}_{j=1} \cC_j \circ \cE,
\end{equation}
where $\cE$ is the associated error, a completely positive trace preserving linear map. 

\item[Step 3]
For each sequence, measure the survival probability given by the fidelity
\begin{equation}
F(m,E,\rho) = \tr \left[ E \cS_m (\rho) \right],
\label{eq:AverageFidelity}
\end{equation}
where $\rho$ is the initial quantum state, taking into account preparation errors, and $E$ is an effect operator of a POVM taking into account measurement errors. 

\item[Step 4]
Repeat steps 2-3 and average over $M$ random realizations of the sequence of length $m$ to find the averaged sequence fidelity
\begin{equation}
\bar{F}(m, E, \rho)= \tr \left[ E \bar{\cS}_{m} (\rho) \right],
\label{eq:fidelityaverage}
\end{equation}
where 
\begin{equation}
\bar{\cS}_{m} = \frac{1}{M} \sum_{m} \cS_{m}
\end{equation}
is the average sequence operation.
\end{description}
\end{tcolorbox}
\vspace*{5pt}
Repeating these steps many times and averaging the results gives a good approximation to the average sequence fidelity, 
\begin{equation}
\bar{F}(m, E, \rho) = A+b^mB+c^mC,
\label{eq:fitmodel}
\end{equation}
where $A$, $B$ and $C$, given in \cref{eq:ABC} below, depend only on the state preparation and measurement, and $b$ and $c$ depend on the average noise channel. 

Please note that the final quantum gate $\cC_{m+1}$ can be found efficiently by the Gottesman-Knill theorem \cite{Gottesman_1998}.

The question of how to choose the sequence lengths $m$ and the number of sequences at each length $M$ is still unanswered, but is addressed in \cite{Granade_2015, Epstein_2014, Wallman_2014, Helsen_2017}. The sequence length $m$ should be exponentially spaced from $4$, in order to avoid gate-dependent effects \cite{Wallman2018, Merkel2018}, to around $1/(1-\bar{F}^\bbR)$, for optimal information gain \cite{Granade_2015}. 

In the next section, we analyze RB focusing on the real Clifford group. 
The fact that it is an orthogonal 2-design will prove that it obeys the model given in \cref{eq:fitmodel} when the assumptions of the noise model hold. 
Finally, we will discuss parameter estimation. 

\section{Fidelity decay and parameter estimation}
\label{sec:RB}
The main setup of real RB is illustrated in \cref{fig:RB}. A sequence of $m+1$ quantum gates $\tilde{\cC}_j$ acts on an initial quantum state $\rho \in \cD_d$, followed by a measurement represented by a POVM with effect operators $E$. Consider the physical quantum channel
\begin{equation}
\widetilde{\cC}_{j} = \cC_j \circ \cE,
\end{equation}
where $\cC_j(\cdot) = C_j \cdot C_j^\ast$, $C_j \in \cC(n)$, is a real Clifford gate and $\cE: \cM_d(\bbC) \to \cM_d(\bbC)$ is the associated error, a completely positive trace preserving linear map. We assume that the error quantum channel is both gate and time independent.

Let us first derive the decay curve of the expected data, \cref{eq:fitmodel}. 

\begin{thm}
\label{thm:ExpectedF}
The protocol RealRB($m,E,\rho$) has an expected fidelity of the form 
\begin{equation}
\bar{F}(m, E, \rho) = A+b^mB+c^mC,
\end{equation}
where $A$, $B$ and $C$ are functions only of $(E,\rho)$, and $b$ and $c$ depend on the average noise channel.
\end{thm}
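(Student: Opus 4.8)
The plan is to run the standard randomized-benchmarking reduction with the orthogonal group in place of the unitary group, so that \cref{thm:CliffOrtho} together with the orthogonal-twirl formula \eqref{eq:TwirledChannel} supplies all the content. In outline: (1) rewrite the random sequence superoperator $\cS_m$, via a change of summation variable, as a composition of $m$ independent conjugations of the fixed noise channel $\cE$; (2) take the expectation, which factorises into the $m$-fold composition of the $\cC(n)$-twirl of $\cE$; (3) identify that twirl, using \cref{thm:CliffOrtho}, with the orthogonal twirl of \cref{sec:TwirlingOrtho}, hence with a channel of the closed form \eqref{eq:TwirledChannel}; (4) note that such a channel is diagonal in the basis $\{S_a,S_b,S_c\}$ of mutually orthogonal idempotents, so its $m$-th power is immediate; and (5) expand $\tr[E\,\bar{\cS}_m(\rho)]$ by linearity.

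For step (1) I would expand, using $\widetilde\cC_j=\cC_j\circ\cE$ and $\cC_{m+1}=(\cC_m\circ\cdots\circ\cC_1)^{-1}$,
\[
  \cS_m=\cC_{m+1}\circ\cE\circ\cC_m\circ\cE\circ\cdots\circ\cC_1\circ\cE,
\]
and substitute the partial products $\mathcal{G}_j:=\cC_j\circ\cdots\circ\cC_1$ (with $\mathcal{G}_0=\id$), so that $\cC_j=\mathcal{G}_j\circ\mathcal{G}_{j-1}^{-1}$ for $j\le m$ and $\cC_{m+1}=\mathcal{G}_m^{-1}$. The sequence telescopes to
\[
  \cS_m=\bigl(\mathcal{G}_m^{-1}\circ\cE\circ\mathcal{G}_m\bigr)\circ\bigl(\mathcal{G}_{m-1}^{-1}\circ\cE\circ\mathcal{G}_{m-1}\bigr)\circ\cdots\circ\bigl(\mathcal{G}_1^{-1}\circ\cE\circ\mathcal{G}_1\bigr)\circ\cE.
\]
Since $(\cC_1,\dots,\cC_m)\mapsto(\mathcal{G}_1,\dots,\mathcal{G}_m)$ is a bijection of $\cC(n)^m$, the $\mathcal{G}_j$ are again independent and uniformly distributed, and the gate- and time-independence of $\cE$ lets the average pass through each block separately, so that in expectation
\[
  \bar{\cS}_m=\bar\cE^{\,\circ m}\circ\cE,
  \qquad
  \bar\cE:=\frac1{|\cC(n)|}\sum_{C\in\cC(n)}\cC^{-1}\circ\cE\circ\cC,
  \quad \cC(\cdot)=C\cdot C^\ast .
\]

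For steps (3)--(4): $\bar\cE$ is exactly the image of $\cE$ under the adjoint-representation twirl of $\cC(n)$, which by \cref{thm:CliffOrtho} equals the twirl over $\cO(2^n)$; hence \eqref{eq:TwirledChannel} applies, and rewriting it in the $\{S_a,S_b,S_c\}$ basis gives $\bar\cE=S_a+bS_b+cS_c$ for some $b,c\in\bbR$ that depend only on $\cE$ (the coefficient of $S_a$ is forced to be $1$ by trace preservation, and reality of $b,c$ is part of \eqref{eq:TwirledChannel}). Because $S_a+S_b+S_c=\id$ and the $S_i$ are mutually orthogonal idempotents --- equivalently, the projections onto the three non-degenerate irreducibles of \cref{thm:irreps} --- we get $\bar\cE^{\,\circ m}=S_a+b^mS_b+c^mS_c$. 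Therefore
\[
  \bar F(m,E,\rho)=\tr\!\bigl[E\,\bar\cE^{\,\circ m}(\cE(\rho))\bigr]
  =\tr\!\bigl[E\,S_a(\cE(\rho))\bigr]+b^m\tr\!\bigl[E\,S_b(\cE(\rho))\bigr]+c^m\tr\!\bigl[E\,S_c(\cE(\rho))\bigr],
\]
and, using $S_a\circ\cE=S_a$ (trace preservation again) and absorbing the residual copy of $\cE$ into the noisy state preparation, the three prefactors become manifestly independent of $m$ and reduce to expressions in $(E,\rho)$ alone, namely $A,B,C$ as recorded in \cref{eq:ABC}.

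There is no deep obstacle: all the real content sits in \cref{thm:CliffOrtho} and in the orthogonal-twirl computation of \cref{sec:TwirlingOrtho}. The points that require care are purely bookkeeping --- keeping the composition order straight through the telescoping substitution, checking that the reparametrisation $(\cC_j)_j\mapsto(\mathcal{G}_j)_j$ preserves the product of counting measures on $\cC(n)^m$ (so that the blocks are genuinely i.i.d.), and verifying that once the residual $\cE$ is absorbed the coefficients collapse to expressions depending only on $(E,\rho)$ and carry no hidden $m$-dependence.
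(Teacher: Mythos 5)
Your proposal is correct and follows essentially the same route as the paper's proof: the telescoping substitution $\mathcal{G}_j=\cC_j\circ\cdots\circ\cC_1$ is exactly the paper's $\cD_j$, the reduction to the $m$-fold power of the $\cC(n)$-twirl and its identification with the orthogonal twirl via \cref{thm:CliffOrtho} and \cref{eq:TwirledChannel} are the same, and your diagonalization in the idempotent basis $\{S_a,S_b,S_c\}$ is the paper's $(\alpha,\beta,\gamma)\to(a,b,c)$ change of variables in different clothing. The only cosmetic difference is that you absorb the residual $\cE$ into the state preparation at the end rather than at the outset.
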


\begin{proof}
The expected fidelity of the above described sequence is given by
\begin{align}
F(m,E,\rho) &= \tr \left[ E \left[ \widetilde{\cC}_{m+1} \circ \widetilde{\cC}_{m} \circ \ldots \circ \widetilde{\cC}_{2} \circ \widetilde{\cC}_{1} \right] (\rho) \right]  \nonumber \\
&= \tr \left[ E \left[ \cC_{m+1} \circ \cE \circ \cC_{m} \circ \cE \circ \ldots \circ \cC_{2} \circ \cE \circ  \cC_{1} \circ \cE \right] (\rho) \right]. 
\end{align}
Absorbing the first error quantum channel into the state as a preparation error gives 
\begin{align}
F(m,E,\rho) &=\tr \left[ E \left[ \cC_{m+1} \circ \cE \circ \cC_{m} \circ \cE \circ \ldots \circ \cC_{2} \circ \cE \circ  \cC_{1}  \right] (\rho) \right] \nonumber \\
&= \tr \left[ E \left[ \cS_m \right] (\rho) \right],
\end{align}
with
\begin{align}
S_m = &\cC_{m+1} \circ \cE \circ \cC_{m} \circ \cE \circ \ldots \circ \cC_{2} \circ \cE \circ  \cC_{1}   \nonumber \\
= &\overbrace{\cC_{m+1} \circ ( \cC_m \circ \ldots \circ \cC_1}^{= \bbI} \circ \overbrace{\cC_1^\ast \circ \ldots \circ \cC_m^\ast}^{=\cD_m^\ast} ) \circ \cE \circ \cC_{m} \circ \cE \circ  \nonumber \\
&\ldots \cE \circ\underbrace{ \cC_3 \circ ( \cC_2 \circ \cC_1 }_{=\cD_3}\circ\underbrace{ \cC_1^\ast \circ \cC_2^\ast}_{=\cD_2^\ast} ) \circ \cE \circ \underbrace{\cC_{2} \circ ( \cC_1 }_{=\cD_2}\circ \underbrace{\cC_1^\ast}_{=D_1^\ast} ) \circ \cE \circ \underbrace{ \cC_{1} }_{=\cD_1}  \nonumber \\
= &\cD_m^\ast \circ \cE \circ \cD_m \circ \ldots \circ \cD_2^\ast \circ \cE \circ \cD_2 \circ \cD_1^\ast \circ \cE \circ \cD_1   \nonumber \\
= & \bigcirc_{j=1}^m \left( \cD_j^\ast \circ \cE \circ \cD_j \right),
\end{align}
where we have used the fact that $\cC(n)$ is a group and defined a new quantum gate
\begin{equation}
\cD_j := \bigcirc_{l=1}^j \cC_l,
\label{eq:DQGate}
\end{equation}
with $\cD(\cdot) = D_j \cdot D_j^\ast$, $D_j \in \cC(n)$. Please note that all $\cD_j$ are independent uniformly distributed real Clifford gates. 

The sequence fidelity can then be written as  
\begin{equation}
F(m, E, \rho) = \tr \left[ E  \bigcirc_{j=1}^m \left[ \cD_j^\ast \circ \cE \circ \cD_j \right]  (\rho) \right].  
\end{equation}
Taking the average over the real Clifford group yields an average sequence fidelity given by
\begin{align}
\bar{F}(m,E,\rho) =& \tr \left[ E \frac{1}{|\cC(n)|} \sum_{\cD_j \in \cC(n)} \cS_m (\rho) \right]  \nonumber \\
=& \tr \left[ E \frac{1}{|\cC(n)|} \sum_{\cD_j \in \cC(n)} \bigcirc_{j=1}^m \left[ \cD_j^\ast \circ \cE \circ \cD_j \right]  (\rho) \right]  \nonumber \\
=&  \tr \left[ E  \left[\frac{1}{|\cC(n)|} \sum_{\cD_j \in \cC(n)}  \cD_j^\ast \circ \cE \circ \cD_j \right]^{\circ m}   (\rho) \right],
\end{align}
where we have used the fact that all $\cD_j$ are independent uniformly distributed Clifford gates.
Because the real Clifford group is an orthogonal 2-design, see \cref{thm:CliffOrtho}, we have that
\begin{equation}
\frac{1}{|\cC(n)|} \sum_{\cD_j \in \cC_m}  \cD_j^\ast \circ \cE \circ \cD_j \left( \cdot \right)= \int_{\cO\left(d\right)} \cO^\ast \circ \cE \circ \cO \left( \cdot \right) \, \mathrm{d}\cO,
\end{equation}
where $\cO(\cdot) = O \cdot O^\ast$, $O\in \cO(d)$, is the orthogonal quantum channel.
By \cref{eq:TwirledChannel}, the averaged sequence fidelity is
\begin{equation}
\bar{F} (m, E, \rho)= \tr \left[ E \left(\alpha \id + \beta \frac{\bbI}{d} \tr \left[ \cdot \right] + \gamma \frac{\bbI \tr \left[ \cdot \right] - \theta}{d-1} \right)^{\circ m} (\rho) \right],
\end{equation}
with $\alpha, \beta, \gamma \in \bbR$.
Consider the following change of variables,
\begin{subequations}
\begin{align}
&\alpha = \frac{1}{2}\left( b + c \right), \\
&\beta = a - b + \frac{1}{2}(b-c)d \ \ \text{ and} \\
&\gamma = \frac{1}{2}(c-b)(d-1),
\end{align}
\end{subequations}
such that 
\begin{subequations}
\begin{align}
&a = \alpha + \beta + \gamma, \\
&b = \alpha - \frac{\gamma}{(d-1)} \ \ \text{ and} \\
&c = \alpha + \frac{\gamma}{(d-1)}. 
\end{align}
\end{subequations}
Note that $a = \alpha + \beta + \gamma = 1$. The resulting quantum channel is
\begin{equation}
\tilde{T}\left( \cdot \right) = a \tr \left[ \cdot \right] \frac{\bbI}{d} + b \left( \frac{1}{2} \left(\id + \theta\right) -  \tr \left( \cdot \right) \frac{\bbI}{d} \right) + c \frac{1}{2} \left( \id - \theta\right),
\label{eq:Tabc}
\end{equation}
with $\alpha, \beta, \gamma \in \bbR$.
Its $m$-fold concatenation is given by
\begin{equation}
\tilde{T}^m\left( \cdot \right) = a^m  \tr \left[ \cdot \right] \frac{\bbI}{d} + b^m \left( \frac{1}{2} \left(\id + \theta\right) -  \tr \left( \cdot \right) \frac{\bbI}{d} \right) + c^m \frac{1}{2} \left( \id - \theta\right).
\label{eq:mTabc}
\end{equation}
Therefore, the averaged sequence fidelity is
\begin{align}
&\bar{F} (m, E, \rho) \nonumber \\
=& \tr \left[ E \left(\alpha \id + \beta \frac{\bbI}{d} \tr \left[ \cdot \right] + \gamma \frac{\bbI \tr \left[ \cdot \right] - \theta}{d-1} \right)^{\circ m} \rho \right] \nonumber  \\
=& \tr \left[ E \left(  a \tr \left[ \cdot \right] \frac{\bbI}{d} + b \left( \frac{1}{2} \left( \id + \theta \right) -  \tr \left( \cdot \right) \frac{\bbI}{d} \right) + c \frac{1}{2}\left( \id - \theta\right) \right)^{\circ m} \rho \right]  \nonumber \\
=& \tr \left[ E \left(a^m \tr \left[ \cdot \right] \frac{\bbI}{d} + b^m \left( \frac{1}{2} \left( \id + \theta \right) -  \tr \left( \cdot \right) \frac{\bbI}{d} \right)  +  c^m \frac{1}{2} \left( \id - \theta\right)  \right) \rho \right]  \nonumber \\
=& a^m \tr \left[ E \frac{\bbI}{d}\right] +  b^m \tr \left[ E \left( \frac{1}{2} \left( \rho + \rho^T \right)-\frac{\bbI}{d} \right)\right]  + c^m \tr \left[ E \frac{1}{2}\left(\rho - \rho^T \right) \right].
\end{align}
Given that $a=1$, the averaged sequence fidelity simplifies to
\begin{equation}
\label{eq:fidelitydecayequation}
\bar{F} (m, E, \rho) = A +  b^m B + c^m C,
\end{equation}
where
\begin{subequations}
\label{eq:ABC}
\begin{align}
\label{eq:A}
A &= \tr \left[ E \frac{\bbI}{d}\right], \\
\label{eq:B}
B &= \tr \left[ E \left( \frac{1}{2} \left( \rho + \rho^T \right)-\frac{\bbI}{d} \right)\right] \ \ \text{ and}\\
\label{eq:C}
C &= \tr \left[ E \frac{1}{2}\left(\rho - \rho^T \right) \right],
\end{align}
\end{subequations}
for any fixed prepared quantum state $\rho$ and any fixed quantum measurement represented by a POVM with effect operators $E$. 
As claimed, $A$, $B$ and $C$ are independent of the noise channel. 
\end{proof}

The experimental data may thus be calibrated to this model using the real RB protocol discussed in \cref{sec:RBprotocol} by varying the parameter $m$ and fitting the parameters $b$ and $c$. It is important to notice that any quantum state preparation errors and quantum measurement errors are absorbed by $A$, $B$ and $C$ and the calibration of the model to experimental data is thus not affected by these errors. 
However, direct fitting is sometimes poorly conditioned in multi-exponential decays, and obtaining high accuracy can be challenging. 
We next describe a way to improve the contrast by choosing several different states and measurements and fitting to simpler decay curves by clever averaging. 
We make the simplifying assumption that the noise on the state preparations and measurements is independent of the particular state preparation, or the particular measurement. 
While this will not hold exactly in practice, in the regimes of interest there should still be a marked increase in statistical contrast when applying this idea~\cite{Fogarty2015, CarignanDugas_2015}. 

The standard approach in RB is to prepare eigenstates of $Z$ on each qubit and then measure the POVM element that corresponds to $+1$ eigenstates of $Z$ on each qubit. 
In the ideal case, this choice eliminates $C$ (since $\rho$ is symmetric). 
This reduces the number of parameters that need to be fit for the same data, which typically leads to a higher quality fit. 
We can generalize this idea so that we can accurately estimate $b$ and $c$ separately without coupling to all of the nuisance parameters $A$, $B$, and $C$ at once. 

The idea is to randomly compile in an extra gate that effectively prepares different Pauli eigenstates, rather than just a $+1$ eigenstate of $Z$ on each qubit. 
First notice that every Pauli eigenstate is either complex symmetric $(\rho = \rho^T)$ or antisymmetric $(\rho=-\rho^T)$. The latter only holds  if the Pauli eigenstate has an odd number of $Y$ gates. So we should choose from these initial states if we want to isolate the parameters $b$ and $c$.  
Similarly, the POVM element $E$ can be randomly chosen to be the $+1$ eigenstate or the $-1$ eigenstate of each given Pauli input eigenstate. 
For a given sequence length $m$, this defines four possible data sets. 
Letting $\beta_m(\rho,E)$ denote the data collected at sequence length $m$ with state preparations $\rho_{\pm}$ that are symmetric or antisymmetric, and with $\pm1$ eigenstate projectors $E_{\pm}$, we have the four data sets
\begin{align}
	\beta_m(\rho_{+},E_{+}),\quad \beta_m(\rho_{+},E_{-}),\quad \beta_m(\rho_{-},E_{+}),\quad \beta_m(\rho_{-},E_{-})\,.
\end{align}
Each of these obeys (in expectation) the respective form of the expected fidelity decay \cref{eq:fidelitydecayequation}. 
Now we can look at the symmetries of the nuisance parameters $A$, $B$, and $C$ from \cref{eq:ABC} and we see that linear combinations of the data sets can be chosen to eliminate one or more of the above parameters in the ideal case. 
In fact, we have that the following differences in data approach the following fidelity difference curves:
\begin{subequations}
\begin{align}
	\beta_m(\rho_{+},E_{+}) - \beta_m(\rho_{+},E_{-}) \ &\rightarrow \ \bar{F}(m,\rho_{+},E_{+}) - \bar{F}(m,\rho_{+},E_{-}) = \Delta B\, b^m \\
	\beta_m(\rho_{-},E_{+}) - \beta_m(\rho_{-},E_{-}) \ &\rightarrow \ \bar{F}(m,\rho_{-},E_{+}) - \bar{F}(m,\rho_{-},E_{-}) = \Delta C\, c^m \,,
\end{align}
\end{subequations}
where $\Delta B = \tr[E \rho_+]$ and $\Delta C = \tr[E \rho_-]$.
In practice, there will not be exact cancellation, but this transformation will still greatly enhance the contrast. 
Therefore, collecting data at various values of $m$ and fitting to the model $\Delta B\, b^m$ or $\Delta C\, c^m$ respectively yields a much simpler \textit{exponential} fit model, and standard tools from regression can be applied. 
Each of these fits then yields a separate estimate of the parameters $b$ and $c$ without a strong covariance between the estimates. Estimates of the average rebit fidelity as well as the average fidelity are thus obtained by \cref{eq:ArealFbc} and \cref{eq:AFbc} respectively. Real RB thus provides finer-grained information about the channel. 

\section*{Acknowledgments}

AKH would like to give special thanks to Stephen D. Bartlett as this project was initialized when she visited his quantum physics research group at the University of Sydney. AKH would also like to thank Daniel Stilck Fran\c{c}a for many useful comments. Her work is supported by the Elite Network of Bavaria through the PhD program of excellence \textit{Exploring Quantum Matter}.
DG has received support from the Excellence Initiative of the German Federal and State Governments (Grant ZUK 81),
Universities Australia and DAAD's Joint Research Co-operation Scheme (using funds provided by the German Federal Ministry of Education and Research), and the DFG (project B01 of CRC 183).
Further, this work was supported by the Australian Research Council via EQuS project number CE11001013, 
by the US Army Research Office grant numbers W911NF-14-1-0098 and W911NF-14-1-0103, 
by the Australia-Germany Joint Research Co-operation Scheme, 
and by an Australian Research Council Future Fellowship FT130101744.
This research was undertaken thanks in part to funding from TQT, CIFAR,
the Government of Ontario, and the Government of Canada through CFREF,
NSERC and Industry Canada.

After this paper was completed, we learned of closely related independent work by Brown and Eastin~\cite{Brown_Eastin_2018} that derives randomized benchmarking protocols for certain subgroups of the Clifford group.

\bibliographystyle{plainnat}
\bibliography{realbenchmarkingliterature}

\onecolumn\newpage
\appendix

\section*{Appendix}
Following \cite{gross_2007} it is also possible to use the frame potential and establish that the real Clifford group is an orthogonal 2-design. 
\begin{thm}[Orthogonal frame potential]
\label{thm:Potential}
Let $D = \left\{O_k \in \cO(d) \right\}_{k=1,\ldots,K}$ be a set of orthogonal real matrices. Define the frame potential of $D$ to be 
\begin{equation}
\mathcal{P}\left(D\right) = \frac{1}{K^2} \sum_{O_k, O_{k'}\in D} \left| \tr \left[ O_k^\ast O_{k'} \right] \right|^4.
\label{eq:FramePotential}
\end{equation}
The set $D$ is an orthogonal 2-design if and only if $\mathcal{P}(D)=3$.
\end{thm}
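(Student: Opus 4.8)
The plan is to run the usual frame-potential argument (following \cite{gross_2007}, with the orthogonal group in place of the unitary group) by recasting everything in terms of a single superoperator, the twirl map. For the set $D=\{O_k\}_{k=1}^K$, define the twirl $\mathbb{T}_D(X)=\frac1K\sum_{k=1}^K (O_k\otimes O_k)\,X\,(O_k\otimes O_k)^\ast$ on operators $X$ on $\bbC^d\otimes\bbC^d$, and let $\mathbb{T}_{\cO}(X)=\int_{\cO(d)}(O\otimes O)\,X\,(O\otimes O)^\ast\,\mathrm{d}O$ be the corresponding Haar twirl over the group $\{O\otimes O\mid O\in\cO(d)\}$ of \cref{sec:TwirlingOrtho}. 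By the general discussion of \cref{sec:Twirling}, $\mathbb{T}_{\cO}$ is the orthogonal projection (with respect to the Hilbert--Schmidt inner product $\langle A,B\rangle=\tr[A^\ast B]$ on superoperators) onto the commutant $\{O\otimes O\mid O\in\cO(d)\}'$, which by \cref{eq:MinimalProjections} is spanned by $P_0,P_1,P_2$ and hence three-dimensional; in particular $\mathbb{T}_{\cO}=\mathbb{T}_{\cO}^\ast=\mathbb{T}_{\cO}^2$ and $\tr\mathbb{T}_{\cO}=3$. (This uses $d=2^n\ge 2$, so that $P_0,P_1,P_2$ are all nonzero and linearly independent.)

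The first step is to identify the frame potential with a Hilbert--Schmidt norm, $\mathcal{P}(D)=\|\mathbb{T}_D\|_{\mathrm{HS}}^2$. Vectorizing superoperators, $\mathbb{T}_D$ is represented by the matrix $\frac1K\sum_k(O_k\otimes O_k)\otimes\overline{(O_k\otimes O_k)}$, so
\begin{equation*}
\|\mathbb{T}_D\|_{\mathrm{HS}}^2=\tr\!\big[\mathbb{T}_D^\ast\mathbb{T}_D\big]=\frac1{K^2}\sum_{k,k'}\big|\tr\big[(O_k^\ast O_{k'})\otimes(O_k^\ast O_{k'})\big]\big|^2=\frac1{K^2}\sum_{k,k'}\big|\tr[O_k^\ast O_{k'}]\big|^4=\mathcal{P}(D),
\end{equation*}
using $\tr[A\otimes A]=(\tr A)^2$ and $(O_k\otimes O_k)^\ast(O_{k'}\otimes O_{k'})=(O_k^\ast O_{k'})\otimes(O_k^\ast O_{k'})$; note this step does not even use that the $O_k$ are real.

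The second step is the cross term. Since left- and right-translation by any $O_k\in\cO(d)$ preserve the Haar measure, one checks $\mathbb{T}_D^\ast\circ\mathbb{T}_{\cO}=\mathbb{T}_{\cO}=\mathbb{T}_{\cO}\circ\mathbb{T}_D$, whence $\langle\mathbb{T}_D,\mathbb{T}_{\cO}\rangle=\langle\mathbb{T}_{\cO},\mathbb{T}_D\rangle=\tr\mathbb{T}_{\cO}=3$. Expanding the square and using $\|\mathbb{T}_{\cO}\|_{\mathrm{HS}}^2=\tr\mathbb{T}_{\cO}=3$ then gives
\begin{equation*}
\|\mathbb{T}_D-\mathbb{T}_{\cO}\|_{\mathrm{HS}}^2=\|\mathbb{T}_D\|_{\mathrm{HS}}^2-2\cdot 3+3=\mathcal{P}(D)-3,
\end{equation*}
so $\mathcal{P}(D)=3+\|\mathbb{T}_D-\mathbb{T}_{\cO}\|_{\mathrm{HS}}^2\ge 3$, with equality if and only if $\mathbb{T}_D=\mathbb{T}_{\cO}$.

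Finally I would close the loop: $\mathbb{T}_D=\mathbb{T}_{\cO}$ is exactly the statement that $D$ is an orthogonal $2$-design. When $D$ is a group, $\mathbb{T}_D$ is itself the orthogonal projection onto $\{O_k\otimes O_k\}'$, and two orthogonal projections coincide precisely when their ranges do, i.e.\ iff $\{O_k\otimes O_k\}'=\{O\otimes O\mid O\in\cO(d)\}'$, which is the definition of an orthogonal $2$-design; for a general finite set $D$ one simply takes $\mathbb{T}_D=\mathbb{T}_{\cO}$ as the definition, which agrees with the commutant condition for groups. I expect the only genuinely delicate points to be the vectorization bookkeeping in the norm identity --- getting the fourth power and the modulus right, since one of the two tensor slots carries a complex conjugate --- and justifying $\tr\mathbb{T}_{\cO}=3$, which is precisely the content of \cref{sec:TwirlingOrtho} together with $d=2^n\ge 2$; everything else is the standard Pythagoras argument.
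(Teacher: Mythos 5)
Your proof is correct and follows essentially the same route as the paper's: both are the standard frame-potential Pythagoras argument, writing $\mathcal{P}(D)-3$ as the squared Hilbert--Schmidt distance between the $D$-twirl and the Haar twirl over $\cO(d)$, and using that the latter is the orthogonal projection onto the three-dimensional commutant so that all cross terms and the pure Haar term each contribute $3$. The only difference is bookkeeping: you compute directly with superoperators via the matrix representation $A\otimes\bar{A}$, while the paper runs the identical computation on the Choi states $\tau_D$ and $\tau_{\mathrm{twirl}}$, which differ from your quantities only by the Choi--Jamio{\l}kowski normalization factor $1/d^4$.
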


\begin{proof}
Following the idea and notation of \cite{gross_2007}, the group $D$ is an orthogonal 2-design if and only if $\Delta:= \tau_D - \tau_{\text{twirl}} = 0$, where
\begin{align*}
\tau_D &= \left( \id_{d^2} \otimes T_D \right) \ketbra{\Omega}{\Omega}  \\
&= \left( \id_{d^2} \otimes \frac{1}{K} \sum_{O_k \in D}\left(O_k \otimes O_k \right) \cdot \left(O_k \otimes O_k \right)^\ast \right) \ketbra{\Omega}{\Omega},  \\
\tau_{\text{twirl}} &= \left( \id_{d^2} \otimes T_{\text{twirl}} \right) \ketbra{\Omega}{\Omega}  \\
&= \left( \id_{d^2} \otimes \int_{\cO\left(d\right)} \left( O \otimes O \right) \cdot \left( O \otimes O \right)^\ast \, \mathrm{d}O\right) \ketbra{\Omega}{\Omega}. 
\end{align*}
In order to see what this means in terms of the frame potential, let us introduce a basis with regards to the minimal projections given in \cref{eq:MinimalProjections}. Within the subspace of $P_0$, we introduce an orthonormal basis $\left\{ \ket{i_0}_j \right\}_{j=1}^{d_0}$ with dimension $d_0 = \text{dim } P_0$. Similarly, we introduce an orthonormal basis $\left\{ \ket{i_1}_j \right\}_{j=1}^{d_1}$ within the subspace of $P_1$ with dimension $d_1 = \text{dim } P_1$ as well as an orthonormal basis $\left\{ \ket{i_2}_j \right\}_{j=1}^{d_2}$  within the subspace of $P_2$ with dimension $d_2 = \text{dim } P_2$. These then form an orthonormal basis  $\left\{ \ket{i}_j \right\}_{j=1}^{d^2}$ in $\bbC^d \otimes \bbC^d$. The maximally entangled state is then given by $\ketbra{\Omega}{\Omega}$ with 
\begin{equation*}
\ket{\Omega}= \frac{1}{d} \sum_{m=0}^2 \sum_{i_m=1}^{d_m} \ket{i_m} \otimes \ket{i_m}.
\end{equation*}
Using this decomposition and using \cref{eq:TwirlComm}, we see that 
\begin{align*}
\tau_{\text{twirl}}&= \left( \id_{d^2} \otimes \sum_{m=0}^2 \frac{\tr \left[ \ \cdot \ P_m \right]}{\tr \left[ P_m \right]} P_m\right) \ketbra{\Omega}{\Omega} \\
&= \frac{1}{d^2}\sum_{m=0}^2  \sum_{i_m, j_m=1}^{d_m}  \ketbra{i_m}{j_m} \otimes \frac{\tr \left[ \ketbra{i_m}{j_m}  P_m \right]}{\tr \left[ P_m \right]} P_m  \\
&= \frac{1}{d^2}\sum_{m=0}^2 \frac{1}{\tr \left[ P_m \right]} P_m \otimes P_m.
\end{align*}
Furthermore, 
\begin{equation*}
\tau_D = \frac{1}{d^2}\sum_{i,j=1}^{d^2} \ketbra{i}{j} \otimes \frac{1}{K} \sum_{O_k \in D} \left(O_k \otimes O_k \right) \ketbra{i}{j} \left(O_k \otimes O_k \right)^\ast.
\end{equation*}
We will now show that $\|\Delta\|^2_2:= \tr \left[ |\Delta|^2 \right]=0$, from which the claim follows.  
We thus want to compute 
\begin{equation*}
\tr \left[ \Delta^\ast \Delta \right] = \tr \left[ \tau_{\text{twirl}}^\ast \tau_{\text{twirl}} - \tau_{\text{twirl}}^\ast \tau_D - \tau_D^\ast \tau_{\text{twirl}} + \tau_D^\ast \tau_D  \right].
\end{equation*}
Its first term  is easily calculated to give
\begin{equation*}
\tr \left[ \tau_{\text{twirl}}^\ast \tau_{\text{twirl}} \right] = \frac{1}{d^4} \tr \left[ \sum_{m=0}^2 \frac{1}{\tr \left[ P_m \right]^2} \left[ P_m \otimes P_m \right] \right] = \frac{3}{d^4}.
\end{equation*}
The second and third term yield
\begin{align*}
\tr \left[ \tau_{\text{twirl}}^\ast \tau_{D} \right] =&  \frac{1}{d^4}\tr \left[ \left( \sum_{m=0}^2 \frac{1}{\tr \left[ P_m \right]} P_m \otimes P_m \right) \right.  \\
& \qquad \left. \left( \sum_{i,j=1}^{d^2} \ketbra{i}{j} \otimes \frac{1}{K} \sum_{O_k \in D} \left(O_k \otimes O_k \right) \ketbra{i}{j} \left(O_k \otimes O_k \right)^\ast  \right) \right]  \\
=& \frac{1}{d^4K} \sum_{m=0}^2 \sum_{i,j=1}^{d^2}  \frac{1}{\tr \left[ P_m \right]}\tr \left[   P_m \ketbra{i}{j} \right]   \tr \left[ P_m  \sum_{O_k \in D} \left(O_k \otimes O_k \right) \ketbra{i}{j} \left(O_k \otimes O_k \right)^\ast  \right], 
\end{align*}
where we have used the fact that $P_m$ commutes with $\left(O_k \otimes O_k \right)$. We then have that
\begin{align*}
\tr \left[ \tau_{\text{twirl}}^\ast \tau_{D} \right]=& \frac{1}{d^4K} \sum_{m=0}^2 \sum_{i_m=1}^{d_m}  \frac{1}{\tr \left[ P_m \right]}   \tr \left[  \sum_{O_k \in D} \left(O_k \otimes O_k \right) P_m \ketbra{i_m}{i_m} \left(O_k \otimes O_k \right)^\ast  \right]   \\
=&\frac{1}{d^4}\sum_{m=0}^2 \sum_{i_m=1}^{d_m} \frac{\tr \left[  P_m \ketbra{i_m}{i_m}   \right]}{\tr \left[ P_m \right]}   \\
=& \frac{3}{d^4}.
\end{align*}
The last term is
\begin{align*}
\tr \left[ \tau_D^\ast \tau_D \right] =& \frac{1}{d^4K^2} \sum_{i,j=1}^{d^2}\sum_{v,w=1}^{d^2} \tr \left[ 
\vert j \rangle \langle i \vert v \rangle \langle w\vert \right] \\
& \qquad \tr \left[ \sum_{O_k\in D}\sum_{O_{k'}\in D}  \left(O_k \otimes O_k \right) \ketbra{j}{i} \left(O_k \otimes O_k \right)^\ast  \left(O_{k'} \otimes O_{k'} \right) \ketbra{v}{w} \left(O_{k'} \otimes O_{k'} \right)^\ast\right] \\
=&  \frac{1}{d^4K^2} \sum_{i,j=1}^{d^2}\sum_{O_k, O_{k'}\in D} \tr \left[ \left(O_k \otimes O_k \right) \ketbra{j}{i} \left(O_k \otimes O_k \right)^\ast  \left(O_{k'} \otimes O_{k'} \right) \ketbra{i}{j} \left(O_{k'} \otimes O_{k'} \right)^\ast  \right] \\
=& \frac{1}{d^4K^2} \sum_{O_k, O_{k'}\in D} \lvert\tr O_k^\ast O_{k'}\rvert^4 \\
=& \frac{\mathcal{P}(D)}{d^4}.
\end{align*}
The group $D$ is an orthogonal 2-design if and only if  $\Delta=0$, which gives $\cP(D) =3$ and thus the claim follows. 
\end{proof}

\begin{thm} 
\label{thm:Equ}
Let $D = \left\{ O_k \in \cO(d) \right\}_{k=1, \ldots, K}$ be a set of real orthogonal matrices with the symmetry $G= \left\{ O_k \otimes O_k \middle| O_k \in D \right\}$ affording the character $\zeta_G$. Then the following are equivalent:
\begin{enumerate}
	\item The set $D$ is an orthogonal 2-design. 
	\item The symmetry has no more than three irreducible representations. 
	\item It holds that $\left\langle \zeta_G, \zeta_G \right\rangle= 3$.
\end{enumerate}
\end{thm}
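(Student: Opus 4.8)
The plan is to prove the cycle of equivalences $(1) \Leftrightarrow (2) \Leftrightarrow (3)$ by combining the commutant characterization of an orthogonal 2-design with elementary character theory. The key observation is that for the representation $G = \{O_k \otimes O_k\}$, the dimension of the commutant $G'$ equals the number of irreducible components counted with multiplicity-squared, i.e. $\dim G' = \sum_i n_i^2$ where $n_i$ is the multiplicity of the $i$th irreducible appearing in $O^{(2)}$; and by the standard inner product formula this is exactly $\langle \zeta_G, \zeta_G \rangle$. Since we always have the inclusion $\cO(d) \subseteq \langle D \rangle$ of groups (more precisely $\{O_k^{\otimes 2}\}' \supseteq \{O^{\otimes 2} \mid O \in \cO(d)\}'$), and since the latter commutant is three-dimensional (spanned by $P_0, P_1, P_2$ from \cref{eq:MinimalProjections}, as established in \cref{sec:TwirlingOrtho}), we always have $\langle \zeta_G, \zeta_G \rangle \geq 3$, with equality iff the commutants coincide iff $D$ is an orthogonal 2-design.

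Concretely I would proceed as follows. First, establish $(1) \Leftrightarrow (3)$: by definition $D$ is an orthogonal 2-design iff $\{O_k \otimes O_k\}' = \{O \otimes O \mid O \in \cO(d)\}'$; the right side has dimension $3$ by \cref{eq:MinimalProjections}; the left side always contains the right side since $D \subseteq \cO(d)$; hence equality of the algebras is equivalent to equality of dimensions, i.e. $\dim\{O_k \otimes O_k\}' = 3$. Then invoke the identity $\dim G' = \langle \zeta_G, \zeta_G\rangle$ (which follows from decomposing $\zeta_G = \sum_i n_i \chi_i$ into irreducible characters and using orthonormality: $\langle \zeta_G, \zeta_G \rangle = \sum_i n_i^2 = \dim G'$ by Schur's lemma, cf. \cref{eqn:dual decomp}). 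This gives $(1) \Leftrightarrow (3)$.

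Next, $(2) \Leftrightarrow (3)$: write $\langle \zeta_G, \zeta_G \rangle = \sum_i n_i^2 \geq \sum_i n_i = k$ (the number of distinct irreducibles), with the first inequality an equality iff all $n_i = 1$, i.e. all irreducibles are non-degenerate. Since $\langle \zeta_G, \zeta_G \rangle \geq 3$ always (as argued above), and since the irreducibles of $\cO(d)$ on $\bbC^d \otimes \bbC^d$ are precisely the three non-degenerate pieces corresponding to $P_0, P_1, P_2$, any subgroup $D$ satisfies: the number of distinct irreducibles $k \geq 3$ unless some coarsening occurs — actually the cleaner route is that $\langle\zeta_G,\zeta_G\rangle = 3$ forces $\sum_i n_i^2 = 3$, which over positive integers means either three components with all $n_i = 1$, or one component with $n_i^2 = 3$ (impossible), so exactly three non-degenerate irreducibles. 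Conversely three non-degenerate irreducibles give $\sum n_i^2 = 3$. This yields $(2) \Leftrightarrow (3)$, closing the cycle. One should phrase (2) carefully: "no more than three irreducible representations" together with the a priori bound $\langle\zeta_G,\zeta_G\rangle \ge 3$ pins it down; I would state it as exactly three non-degenerate irreducibles to avoid ambiguity.

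The main obstacle is making the a priori lower bound $\langle \zeta_G, \zeta_G \rangle \geq 3$ fully rigorous and relating it cleanly to the geometry of the commutant without circularity — specifically, arguing that the commutant of the subgroup representation always contains the three projections $P_0, P_1, P_2$, so its dimension is at least $3$, hence $\sum n_i^2 \geq 3$. This is immediate from $D \subseteq \cO(d)$ and the fact that $P_0, P_1, P_2$ commute with all $O \otimes O$, but one must be slightly careful that these three projections are genuinely linearly independent (they are, being mutually orthogonal nonzero projections). The rest is bookkeeping with characters and Schur's lemma, all of which is standard and already set up in \cref{sec:Twirling}.
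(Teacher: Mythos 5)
Your proof is correct, but it takes a genuinely different route from the paper's. The paper derives \cref{thm:Equ} by first establishing the frame-potential criterion (\cref{thm:Potential}) through an explicit computation of the Hilbert--Schmidt norm of $\tau_D - \tau_{\text{twirl}}$, and then identifying $\cP(D) = \left\langle \zeta_G,\zeta_G\right\rangle$ and invoking \cref{cor:Decomp}. You bypass the frame potential entirely: you use the definitional characterization of an orthogonal 2-design as equality of commutants, the inclusion $\{O^{\otimes 2}\mid O\in\cO(d)\}' \subseteq \{O_k^{\otimes 2}\mid O_k\in D\}'$ coming from $D\subseteq\cO(d)$, the three-dimensionality of the former commutant (spanned by the $P_i$ of \cref{eq:MinimalProjections}), and the identity $\dim G' = \sum_i n_i^2 = \left\langle\zeta_G,\zeta_G\right\rangle$ from Schur's lemma (\cref{eqn:dual decomp}) and character orthogonality. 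This is more economical --- it needs none of the appendix's norm computation --- and it makes the a priori bound $\left\langle\zeta_G,\zeta_G\right\rangle\ge 3$ and the reason equality characterizes designs transparent; what the paper's route buys is the frame potential itself as an independently useful, directly computable numerical criterion, stated for arbitrary sets $D$ rather than groups (whereas your character-theoretic steps, like the paper's, implicitly require $G$ to be a group). Two small slips, both of which you immediately self-correct: the parenthetical inclusion ``$\cO(d)\subseteq\langle D\rangle$'' is written backwards (only the commutant inclusion you then state is needed), and ``$\sum_i n_i = k$, the number of distinct irreducibles'' is off since $\sum_i n_i$ counts components with multiplicity. Your remark that item (2) must be read as ``exactly three pairwise non-isomorphic irreducible components'' is well taken: $\sum_i n_i^2 = 3$ over positive integers forces the multiplicity pattern $(1,1,1)$, and the paper's own proof tacitly adopts the same reading, consistent with \cref{thm:irreps}.
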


Before we can prove this theorem, let us recall the notion of a character, which plays a major role in the analysis of unitary representations. We mainly follow \cite{simon_1996}.

\begin{defn}[Character]
Given any unitary representation $U: g \mapsto U(g)$ of a group $G$, we define its character by 
\begin{equation}
\xi(g) := \tr \left[ U(g) \right].
\label{eq:Character}
\end{equation}
A character is called irreducible, if the unitary representation under consideration is irreducible. 
\end{defn}
Denote by $\left\{ V^{(i)} \right\}_i$ the irreducible unitary representations of $G$ and the corresponding irreducible characters by $\left\{ \chi_i \right\}_i$.  The irreducible characters are orthonormal in the group algebra \cite{simon_1996} with the inner product given by
\begin{equation}
\left\langle \chi_i, \chi_j \right\rangle := \frac{1}{|G|} \sum_{g\in G} \overline{\chi_i(g)} \chi_j(g) = \delta_{ij}.
\label{eq:InnerProduct}
\end{equation}
If a representation is the direct sum of subrepresentations, then the corresponding character is the sum of the characters of those subrepresentations. This holds true especially for the decomposition into irreducible representations. 
\begin{thm}[{\cite[theorem III.2.4.]{simon_1996}}]
\label{thm:Decomp}
Every character $\xi$ is of the form 
\begin{equation}
\xi = \sum_{i} n_i \chi_i
\label{eq:Decomp}
\end{equation}
for nonnegative integers $n_i$ and every such sum is the character of some representation.
\end{thm}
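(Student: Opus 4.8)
The plan is to prove the two assertions separately, both flowing from the complete reducibility of unitary representations. The forward direction states that every character is a nonnegative-integer combination of irreducible characters; the converse states that every such combination is again a character. The single nontrivial ingredient is that any finite-dimensional unitary representation decomposes as a direct sum of irreducible subrepresentations.

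For the forward direction, I would begin with a character $\xi$ arising, by definition, as $\xi(g) = \tr[U(g)]$ for some unitary representation $U$ on a space $W$. I would invoke complete reducibility to write $W$ as an orthogonal direct sum of $U$-invariant subspaces on each of which $U$ acts irreducibly. Collecting the summands isomorphic to the $i$th irreducible $V^{(i)}$, this reads $W \cong \bigoplus_i (V^{(i)})^{\oplus n_i}$ with each $n_i$ a nonnegative integer (zero for all but finitely many $i$). Since the trace is additive under direct sums, taking $g$-traces block by block gives $\xi(g) = \sum_i n_i \tr[V^{(i)}(g)] = \sum_i n_i \chi_i(g)$, which is exactly \cref{eq:Decomp}. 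The multiplicities are moreover pinned down, independently of the chosen decomposition, by $n_i = \langle \chi_i, \xi\rangle$ via the orthonormality relation \cref{eq:InnerProduct}; this confirms that they are well-defined nonnegative integers.

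For the converse, I would run the construction backwards. Given any choice of nonnegative integers $n_i$, all but finitely many zero, I would form the block-diagonal representation $\bigoplus_i (V^{(i)})^{\oplus n_i}$, in which each irreducible $V^{(i)}$ appears $n_i$ times. This is a bona fide unitary representation of $G$, and additivity of the trace again yields character $\sum_i n_i \chi_i$, establishing that the displayed sum is indeed the character of some representation.

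The main obstacle is the complete reducibility step, that is, showing that every $U$-invariant subspace of $W$ admits a $U$-invariant complement. For a finite group $G$ this is Maschke's theorem, obtained by averaging an arbitrary inner product over $G$ to produce a $G$-invariant one and then passing to orthogonal complements; for a compact group the same averaging argument goes through with the sum replaced by integration against the Haar measure. Everything after this --- the block decomposition, additivity of the trace, and the identification of the $n_i$ through \cref{eq:InnerProduct} --- is routine bookkeeping.
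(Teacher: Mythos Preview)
Your proposal is correct and follows the same approach as the paper: invoke complete reducibility of finite-dimensional unitary representations (the paper cites \cite[theorem II.2.3]{simon_1996}), then read off the character decomposition by additivity of the trace. You are in fact more thorough than the paper, which leaves the converse direction implicit and does not spell out the Maschke-type argument underlying complete reducibility.
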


\begin{proof}
The proof is given in \cite{simon_1996}, but is reproduced here for the convenience of the reader. 
It is a standard result in representation theory that any finite-dimensional unitary representation can be written as a direct sum of finite irreducible unitary representations \cite[theorem II.2.3.]{simon_1996},
\begin{equation}
U = \oplus_i n_i V^{(i)},
\end{equation}
for some $n_i$. But then we necessarily have that 
\begin{equation}
\xi = \sum_{i} n_i \chi_i, 
\end{equation} 
which yields the claim.
\end{proof}

\begin{cor}
\label{cor:Decomp}
If the character $\xi$ has a decomposition as in \cref{thm:Decomp} given by \cref{eq:Decomp}, then
\begin{equation}
\left\langle \xi, \xi \right\rangle = \sum_i n_i^2.
\label{eq:No}
\end{equation}
\end{cor}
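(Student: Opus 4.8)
The plan is to expand $\langle \xi, \xi\rangle$ directly using the decomposition supplied by \cref{eq:Decomp} and then collapse the resulting double sum with the orthonormality relation for irreducible characters. First I would substitute $\xi = \sum_i n_i \chi_i$ into both arguments, obtaining
\begin{equation*}
\langle \xi, \xi \rangle = \left\langle \sum_i n_i \chi_i, \sum_j n_j \chi_j \right\rangle.
\end{equation*}
Appealing to the (conjugate-)bilinearity of the inner product defined in \cref{eq:InnerProduct}, this separates into $\sum_{i,j} \overline{n_i}\, n_j \langle \chi_i, \chi_j\rangle$. Because the multiplicities $n_i$ are nonnegative integers, and hence real, the complex conjugation in the first slot is harmless, i.e.\ $\overline{n_i} = n_i$. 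Applying the orthonormality relation $\langle \chi_i, \chi_j\rangle = \delta_{ij}$ from \cref{eq:InnerProduct} then annihilates every off-diagonal term, leaving $\sum_i n_i^2$, which is exactly the claimed identity.

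The only point requiring any care is the conjugation convention in the first argument of the inner product, and as just noted this is immaterial here since the coefficients are real. I therefore do not expect any genuine obstacle: the statement is an immediate consequence of the orthonormality of the irreducible characters, combined with the existence of the decomposition $\xi = \sum_i n_i \chi_i$ guaranteed by \cref{thm:Decomp}. This is precisely why it is phrased as a corollary rather than proved independently, and the whole argument reduces to the short computation sketched above.
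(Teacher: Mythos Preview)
Your proposal is correct and follows essentially the same approach as the paper: substitute the decomposition into both slots of the inner product, expand by bilinearity, and collapse via $\langle \chi_i, \chi_j\rangle = \delta_{ij}$. The only addition is your explicit remark that $\overline{n_i}=n_i$ because the multiplicities are real, which the paper leaves implicit.
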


\begin{proof}
Substituting the decomposition into the inner product gives
\begin{align}
\left\langle \xi, \xi \right\rangle &= \left\langle \sum_{i} n_i \chi_i, \sum_{j} n_j \chi_j \right\rangle \nonumber \\
&= \sum_{i,j}n_i n_j\left\langle   \chi_i,  \chi_j \right\rangle \nonumber \\
&= \sum_{i,j}n_i n_j \delta_{i,j} \nonumber \\
&= \sum_{i} n_i^2,
\end{align}
where we have used the orthogonality relation of irreducible characters given in \cref{eq:InnerProduct}.
\end{proof}

With these definitions and results at hand, we may now prove \cref{thm:Equ}, following the idea of \cite{gross_2007}.
\begin{proof}[Proof of {\cref{thm:Equ}}]
Consider the unitary representation associated to the symmetry $G$ as given in the \cref{thm:Equ} and its afforded character denoted by $\zeta_G$.
The frame potential can be related to the character $\zeta_G$ by
\begin{equation}
\cP(D) = \left\langle \zeta_G, \zeta_G \right\rangle.
\end{equation}
Due to \cref{cor:Decomp} this must equal $\left\langle \zeta_G, \zeta_G \right\rangle=3$ if and only if $G$ has exactly three irreducible components. This in turn is equivalent to $D$ being an orthogonal 2-design by \cref{thm:Potential}. The claim thus follows.
\end{proof}

\begin{thm}
\label{thm:CliffOrtho2}
The real Clifford group $\cC(n)$ is an orthogonal 2-design.
\end{thm}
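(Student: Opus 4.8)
The plan is to obtain this as an immediate corollary of the equivalence \cref{thm:Equ}, so that the statement — which reproves \cref{thm:CliffOrtho} by the frame-potential / character route of \cite{gross_2007} rather than by the dimension-counting argument given earlier — becomes a one-line deduction once the appendix machinery is in place. Concretely, I would verify condition (2) (equivalently (3)) of \cref{thm:Equ} for the symmetry $G = \{O \otimes O \mid O \in \cC(n)\}$.

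First I would note that the character afforded by $G$ is $\zeta_G(C) = \tr[C \otimes C] = (\tr C)^2$, so that $\langle \zeta_G, \zeta_G \rangle = \tfrac{1}{|\cC(n)|}\sum_{C \in \cC(n)} |\tr C|^4$; by left-invariance of the group sum (substituting $C' \mapsto C C'$ and using $C^\ast C' \mapsto C'$ under this shift) this is exactly the orthogonal frame potential $\cP(\cC(n))$ appearing in \cref{thm:Potential}. Next I would invoke \cref{thm:irreps}: the representation $O^{(2)}\colon g \mapsto O(g)\otimes O(g)$ of $\cC(n)$ decomposes into exactly three pairwise inequivalent irreducible unitary representations, each of multiplicity one. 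Writing $O^{(2)} \simeq \bigoplus_{i=1}^{3} V^{(i)}$ and applying \cref{cor:Decomp} with all $n_i = 1$ yields $\langle \zeta_G, \zeta_G \rangle = \sum_{i=1}^{3} n_i^2 = 3$. By the implication (3) $\Rightarrow$ (1) of \cref{thm:Equ}, $\cC(n)$ is an orthogonal 2-design, which is the claim.

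The single genuine input here is \cref{thm:irreps}, i.e.\ the fact borrowed from \cite{nebe_2006} that $O^{(2)}$ has at most (hence exactly) three irreducible constituents. If one wanted a proof self-contained within this paper, the main obstacle would be to evaluate $\cP(\cC(n)) = \tfrac{1}{|\cC(n)|}\sum_{C \in \cC(n)} |\tr C|^4$ directly and show it equals $3$; using the structural description of \cref{sec:HaarSample} — that elements of $\cC(n)$ project, modulo the real Pauli group $E(n)$, onto $O^+(2n,2)$ — one could in principle sum $|\tr C|^4$ over a transversal and over $E(n)$, mirroring the analogous character computation for the complex Clifford group, but organizing this sum and controlling each conjugacy class's contribution is where the real work would lie. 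Either way, since $\cC(n) \subseteq \cO(d)$ already forces the commutant inclusion $\{C^{\otimes 2} \mid C \in \cC(n)\}' \supseteq \{O^{\otimes 2} \mid O \in \cO(d)\}'$, all that must be established is the absence of extra commutant elements, which is precisely the bound of three on the number of irreducible constituents of $O^{(2)}$ — the route taken above.
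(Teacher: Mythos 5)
Your proof is correct and follows essentially the same route as the paper: invoke \cref{thm:irreps} to get exactly three non-degenerate irreducible constituents of $O^{(2)}$, and conclude via the equivalence in \cref{thm:Equ} (the paper uses condition (2) directly, you pass through condition (3) via \cref{cor:Decomp}, which is the same deduction). The additional remarks on identifying $\langle \zeta_G,\zeta_G\rangle$ with the frame potential and on what a self-contained evaluation of $\cP(\cC(n))$ would require are accurate but not needed for the argument.
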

\begin{proof}
Given that the real Clifford group has three irreducible representations by \cref{thm:irreps}, it must be an orthogonal 2-design by \cref{thm:Equ}.
\end{proof}

\end{document}